\DeclareMathOperator*{\argmin}{arg\,min}
\newtheorem{thrm}{Theorem}
\newtheorem{prop}{Proposition}
\newtheorem{defi}{Definition}
\newtheorem{exam}{Example}
\newtheorem{obs}{Observation}
\begin{document}

\begin{frontmatter}
\title{A double oracle approach to minmax regret optimization problems with interval data}


\author[1]{Hugo Gilbert\corref{cor1}}
\ead{hugo.gilbert@lip6.fr}

\author[1]{Olivier Spanjaard}
\ead{olivier.spanjaard@lip6.fr}

\cortext[cor1]{Corresponding author}

\address[1]{Sorbonne Universit\'es, UPMC Univ Paris 06, CNRS, LIP6 UMR 7606, 4 place Jussieu, 75005 Paris}

\begin{abstract}
In this paper, we provide a generic anytime lower bounding procedure for minmax regret optimization problems. We show that the lower bound obtained is always at least as accurate as the lower bound recently proposed by Chassein and Goerigk \cite{DBLP:journals/eor/ChasseinG15}. This lower bound can be viewed as the optimal value of a linear programming relaxation of a mixed integer programming formulation of minmax regret optimization, but the contribution of the paper is to compute this lower bound via a double oracle algorithm \cite{McMahanGB03} that we specify. The double oracle algorithm is designed by relying on a game theoretic view of robust optimization, similar to the one developed by Mastin \textit{et al.} \cite{mastin2015randomized}, and it can be efficiently implemented for any minmax regret optimization problem whose standard version is ``easy''. We describe how to efficiently embed this lower bound in a branch and bound procedure. Finally we apply our approach to the robust shortest path problem. Our numerical results show a significant gain in the computation times compared to previous approaches in the literature.     
\end{abstract}

\end{frontmatter}


\section{Introduction} 
\label{sec:introduction}

The definition of an instance of a combinatorial optimization problem requires to specify parameters (e.g., costs of edges in a network problem) wich can be uncertain or imprecise. For instance, instead of specifying scalar values, there are situations where only an interval of possible values is known for each parameter. It is then assumed that the different parameters (e.g., the different costs of edges) may take any value of the given interval independently of the other parameters. The assignment of a scalar value in each interval is called a \emph{scenario}.

One may then optimize the worst case performance (i.e., performance of the considered solution in the worst possible scenario) but this approach often leads to an overly conservative solution. A less conservative approach, which is known as minmax regret optimization, minimizes the maximum difference in the objective value of a solution over all scenarios, compared to the best possible objective value attainable in this scenario. Unfortunately, the minmax regret versions of combinatorial optimization problems with interval data are often NP-hard.

For tackling minmax regret optimization problems, it is therefore useful to investigate efficient algorithms with performance guarantee.
Kasperski and Zielinski proved that the algorithm that returns a midpoint solution (i.e., an optimal solution in the scenario where one takes the middle of each interval) has an approximation ratio of 2 \cite{KasperskiZ06}. 
Recently, Chassein and Goerigk presented a new approach for determining a lower bound on the objective value of a minmax regret solution. It allows them to compute a tighter \emph{instance-dependent} approximation ratio of a midpoint solution, that is of course at most 2 \cite{DBLP:journals/eor/ChasseinG15}. Moreover they showed how their bound can be used in a branch and bound framework for the minmax regret version of the shortest path problem with interval data, and obtained an improvement on the computation times compared to state-of-the-art approaches, especially when the uncertainty is high. Following Chassein and Goerigk, we also present a very general approach to compute a lower bound on the objective value of a minmax regret solution, which further improves the instance-dependent approximation ratio of a midpoint solution.

Our approach relies on a game-theoretic view of robust optimization. Such a game-theoretic view has already been adopted once in the literature by Mastin \textit{et al.} \cite{mastin2015randomized}. They view robust optimization as a two-player zero-sum game where one player (optimizing player) selects a feasible solution, and the other one (adversary) selects a possible scenario for the values of the parameters. More precisely, they consider a randomized model where the optimizing player selects a probability distribution over solutions (called a mixed solution hereafter) and the adversary selects values with knowledge of the player's distribution but not its realization. They show that the determination of a randomized Nash equilibrium of this game (and thus of a minimax regret mixed solution for the optimizing player) can be performed in polynomial time provided the standard version of the problem (i.e., where the precise values of the parameters are known) is itself polynomial. They use a linear programming formulation that may have an exponential number of constraints, but for which a polynomial time separation oracle amounts to solving an instance of the standard problem.

In this article, we also compute a randomized Nash equilibrium of the game but we use a double oracle approach \cite{McMahanGB03} that reveals more efficient in computation times. Furthermore, we show how the bound resulting from this computation can be used in a branch and bound algorithm for the minmax regret version of a (deterministic) robust optimization problem. We illustrate the interest of this approach on the robust shortest path problem with interval data. Comparing our results with the ones obtained by Chassein and Goerigk \cite{DBLP:journals/eor/ChasseinG15} and Montemanni et al. \cite{DBLP:journals/orl/MontemanniGD04} we find considerable improvements in computation times on some classes of instances.

The remainder of the paper is structured as follows. Section 2 recalls the definition of a minmax regret problem and introduces the notations. In section 3, a game theoretic view of minmax regret optimization, similar to the one proposed by Mastin \textit{et al.} \cite{mastin2015randomized}, is developed. This view helps us derive a precise lower bound on the objective value of a minmax regret solution. The method to compute this lower bound uses a double oracle algorithm described in section 4. In section 5 we discuss how our work can be used efficiently in a branch and bound algorithm. Lastly, in section 6, we test our approach on the mimax regret version of the shortest path problem with interval data.  

\section{Minmax regret problems}
\label{sec:minmaxRegret}
To study minmax regret problems, we adopt the notations of Chassein and Goerigk \cite{DBLP:journals/eor/ChasseinG15}. A combinatorial optimization problem is stated as follows (in minimization):
\begin{equation}
 \underset{x\in\mathcal{X}}{\min} \sum_{i=1}^n c_i x_i
\label{eq:P}
\end{equation}
where $\mathcal{X} \subset \{0,1\}^n =: \mathbb{B}^n$ denotes the set of feasible solutions, and $c_i$ represents the cost of element $i$. We refer to (\ref{eq:P}) as the standard optimization problem.

In robust optimization, instead of considering only a single cost vector $c$ we assume there is an uncertainty set $\mathcal{U}\subset \mathbb{R}^n$ of different possible cost vectors $c$. Each possible cost vector $c$ is also called a scenario. Two approaches can
be distinguished according to the way the set of scenarios is defined: the \emph{interval model} where each $c_i$ takes value in an interval $[\underline{c}_i,\overline{c}_i]$ and where the set of scenarios is defined implicitly as the Cartesian product $\mathcal{U} = \times_{i=1}^{n}[\underline{c}_i,\overline{c}_i]$; the \emph{discrete scenario model} where the possible scenarios consist in a finite set of cost vectors. Intuitively, a \emph{robust} solution is a solution that remains suitable whatever scenario finally occurs. Several criteria have been proposed to formalize this: the \emph{minmax} criterion consists of evaluating a solution on the basis of its worst value over all scenarios, and
the \emph{minmax regret} criterion consists of evaluating a solution on the basis of its maximal deviation from the optimal value over all scenarios. 


We consider here the approach by minmax regret optimization in the interval model. For a feasible solution $x$, the regret $Reg (x,c)$ induced by a scenario $c$ is defined as the difference between the realized objective value $val(x, c)$ and the best possible objective value $\min_y val(y, c)$ in this scenario, where $val(x, c)=\sum_{i=1}^n c_i x_i$. 
Similarly, for a feasible solution $x$, the regret $Reg (x,y)$ induced by another feasible solution $y$ is defined by maximizing over possible scenarios $c\in\mathcal{U}$ the difference $Reg (x,y,c)$ between the realized objective values $val(x, c)$ and $val(y, c)$.
The (maximal) regret $Reg(x)$ of a feasible solution $x$ is then defined as the maximal regret induced by a scenario or equivalently the maximal regret induced by another solution. More formally, we have:
\begin{align}
  Reg (x,y,c) &= val(x, c) - val(y, c) \\ 
  Reg (x,c) &= \max_{y\in\mathcal{X}} Reg(x,y,c) = val(x, c) - val(x^c, c) \\ 
  Reg (x,y) &= \max_{c\in\mathcal{U}} Reg(x,y,c) \\ 
  Reg (x) &= \underset{c\in\mathcal{U}}{\max} Reg(x,c) = \underset{y\in\mathcal{X}}{\max} Reg(x,y) 
\end{align}
where $x^c = \argmin_{x\in\mathcal{X}} \sum_{i=1}^n c_i x_i$ is an optimal solution of the standard optimization problem for cost vector $c$. In the following, for brevity, we may denote by $val^{*}(c) = val(x^c, c)$ the objective value of an optimal solution of the standard optimization problem for cost vector $c$.

In a minmax regret optimization problem, the goal is to find a feasible solution with minimal regret.
The minmax regret optimization problem 
can thus be formulated as:

\begin{equation}
\underset{x\in\mathcal{X}}{\min}Reg(x)
\label{eq:R}
\end{equation}


The optimal value for this problem is denoted by $OPT$ in the remainder of the article. As stated in the introduction, most minmax regret versions of standard optimization problems are NP-hard~\cite{aissi2009min}. It is therefore worth investigating approximation algorithms. To the best of our knowledge, the most general polynomial time approach in this concern is the midpoint algorithm proposed by Kasperski and Zielinski \cite{KasperskiZ06} which returns a solution minimizing $\sum_{i=1}^n \hat{c}_i x_i$, where $\hat{c}_i = (\underline{c}_i + \overline{c}_i)/2$. We denote by $x_{mid}$ such a solution.

The regret of a midpoint solution $x_{mid}$ is always not more than $2.OPT$ \cite{KasperskiZ06}. While there are problem instances where this approximation guarantee is tight, the regret is actually smaller for a lot of instances. In order to account for that, one needs to design a tighter \emph{instance-dependent} ratio. In this line of research, Chassein and Goerigk \cite{DBLP:journals/eor/ChasseinG15} show how to improve the 2-approximation guarantee for every specific instance with only small computational effort. To this end they propose a new lower bound $LB_{CG}$ on the minmax regret value. This lower bound is then used to compute an instance-dependent approximation guarantee $Reg(x_{mid})/LB_{CG}$ for a midpoint solution $x_{mid}$. This guarantee equals 2 in the worst case. 


In this paper we investigate another lower bound based on game-theoretic arguments. This lower bound corresponds to the value of a randomized Nash equilibrium in a game that will be specified in the sequel. We prove that this lower bound is tighter than the one presented by Chassein and Goerigk \cite{DBLP:journals/eor/ChasseinG15}. Therefore, this lower bound further improves the instance-dependent approximation guarantee of a midpoint solution.

Note that Mastin \textit{et al.} investigated a similar game-theoretic view of robust optimization \cite{mastin2015randomized}. They showed how to compute a randomized Nash equilibrium by using a mathematical programming formulation involving an exponential number of constraints. They propose to solve this program via a cutting-plane method relying on a polynomial time separation oracle provided the standard version of the tackled problem is itself polynomial. By using the ellipsoid method, it follows that the complexity of computing a randomized Nash equilibrium is polynomial by the polynomial time equivalence of the separation problem and the optimization problem \cite{Grotschel81} (but no numerical tests have been carried out in their paper). We consider here the same game but we propose a \emph{double oracle} approach to compute a randomized Nash equilibrium. While the proposed double oracle approach is not a polynomial time algorithm, it will reveal more efficient in practice than a cutting plane approach.

In the next section, we present the game-theoretic view of robust combinatorial optimization. This will allow us to define the lower bound considered in this paper.

\section{Game Theoretic View}
\label{sec:game}
The minmax regret optimization problem defined by (\ref{eq:R}) induces a zero-sum two-player game where the sets of pure strategies are the set of feasible solutions $\mathcal{X}$ for player 1 (also called $x$-player) and the set of scenarios $\mathcal{U}$ for player 2 (also called $c$-player). The resulting payoff is then given by $Reg(x,c)$, the regret of using solution $x$ in scenario $c$. Obviously, the $x$-player (resp $c$-player) aims at minimizing (resp. maximizing) $Reg(x,c)$. The $x$-player (resp. $c$-player) can also play a mixed strategy by playing according to a probability distribution $P_{\mathcal{X}}$ (resp. $P_{\mathcal{U}}$) over set $\mathcal{X}$ (resp. $\mathcal{U}$). We denote by $P_{\mathcal{X}}(x)$ (resp. $P_{\mathcal{U}}(c)$) the probability to play strategy $x$ (resp. $c$) in mixed strategy $P_{\mathcal{X}}$ (resp. $P_{\mathcal{U}}$). In the following, we restrict ourselves to mixed strategies over a \emph{finite} set of pure strategies. We will show at the end of this section that this assumption is not restrictive.
We denote by $\Delta_{\mathcal{X}}$ (resp. $\Delta_{\mathcal{U}}$) the set of all possible mixed strategies for the $x$-player (resp. $c$-player). We follow the convention to denote random variables with capital letters (e.g., we write $C$ instead of $c$ to denote random cost vectors). 

The regret functions $Reg(.), Reg(.,.), Reg(.,.,.)$ are then extended to mixed strategies by linearity in probability. For instance, if the $c$-player plays $P_{\mathcal{U}}$, the regret of the $x$-player for playing $P_{\mathcal{X}}$  is the expectancy $\sum_{x \in \mathcal{X}}\sum_{c \in \mathcal{U}} P_{\mathcal{X}}(x) P_{\mathcal{U}}(c) Reg(x,c)$.  

\subsection{Best response functions and lower bound}
Given the strategy of one player, a best reponse is an optimal pure strategy that can be played by the other player. We will call a best $x$-response (resp. $c$-response) a best response of the $x$-player (resp. $c$-player). We can now state some results on the best responses of each player.    

\begin{obs}
A best $x$-response to a scenario $c$ is given by $x^c$.
\end{obs}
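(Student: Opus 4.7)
The plan is to unfold the definition of the regret and note that the term subtracted depends only on $c$, not on $x$, so minimizing regret over $x$ reduces to the standard optimization problem whose solution is $x^c$ by definition.

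More concretely, I would first recall that for a fixed scenario $c$, a best $x$-response is any pure strategy $x^* \in \mathcal{X}$ attaining $\min_{x \in \mathcal{X}} Reg(x,c)$. Using the definition $Reg(x,c) = val(x,c) - val(x^c,c)$ established in the preceding equations, the second term $val(x^c,c)$ is a constant with respect to $x$. Hence
\[
\argmin_{x \in \mathcal{X}} Reg(x,c) \;=\; \argmin_{x \in \mathcal{X}} val(x,c) \;=\; \argmin_{x \in \mathcal{X}} \sum_{i=1}^n c_i x_i,
\]
and this latter quantity is exactly $x^c$ by the definition given just after equation (5). Therefore $x^c$ is a best $x$-response to scenario $c$.

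There is essentially no obstacle here: the statement is immediate from the definitions, and the only thing worth emphasizing is that restricting to \emph{pure} best responses loses nothing because $Reg(\cdot,c)$ is linear over mixed strategies, so a pure minimizer always exists in $\mathcal{X}$. No additional machinery is needed.
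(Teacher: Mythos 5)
Your argument is correct and is exactly the immediate-from-definitions reasoning the paper has in mind: the paper states this as an observation without proof precisely because $Reg(x,c) = val(x,c) - val(x^c,c)$ differs from $val(x,c)$ only by a constant in $x$, so the minimizer is $x^c$ by definition. Nothing further is needed.
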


The next observation is a well known characterization of a worst scenario for a given feasible solution and has been used in many related papers \cite{KPY01,aissi2009min}.
\begin{obs}
A best $c$-response to a feasible solution $x$ is given by $c^x$ defined by $c_i^x = \overline{c}_i$ if $x_i=1$ and $\underline{c}_i$ otherwise. \label{obs:cresp-sol}
\end{obs}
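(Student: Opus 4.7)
The plan is to show that for any alternative scenario $c' \in \mathcal{U}$, we have $Reg(x, c^x) \geq Reg(x, c')$. To avoid having to explicitly characterize the optimal responses $x^{c^x}$ and $x^{c'}$ simultaneously, I will use $y' = x^{c'}$ as a common comparison solution in both regret expressions.

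First, I would rewrite $Reg(x, c') = val(x, c') - val(y', c')$ by definition of $y'$, and then use the suboptimality bound $val(x^{c^x}, c^x) \leq val(y', c^x)$ to obtain $Reg(x, c^x) \geq val(x, c^x) - val(y', c^x)$. Subtracting, it suffices to prove
\begin{equation*}
\sum_{i=1}^n (c^x_i - c'_i)(x_i - y'_i) \geq 0.
\end{equation*}

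Since both $x$ and $y'$ lie in $\mathbb{B}^n$, I would finish by a four-case analysis on $(x_i, y'_i) \in \{0,1\}^2$. The terms with $x_i = y'_i$ vanish. When $x_i = 1$ and $y'_i = 0$, we have $x_i - y'_i = 1$ and by construction $c^x_i = \overline{c}_i \geq c'_i$, so the contribution is non-negative; symmetrically, when $x_i = 0$ and $y'_i = 1$, we have $x_i - y'_i = -1$ and $c^x_i = \underline{c}_i \leq c'_i$, giving again a non-negative contribution.

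I do not anticipate a real obstacle: the only subtlety is to notice that directly comparing $val(x^{c^x}, c^x)$ with $val(x^{c'}, c')$ is awkward since the two minimizers differ, but inserting the common solution $y' = x^{c'}$ into the regret under $c^x$ (via suboptimality of $y'$ there) reduces the argument to a sign analysis of a single bilinear sum. The construction of $c^x$ is exactly the one that makes each coordinate term of this sum non-negative, which is why the choice $c^x_i \in \{\underline{c}_i, \overline{c}_i\}$ based on the value of $x_i$ is the right one.
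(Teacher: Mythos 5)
Your proof is correct. Note, however, that the paper itself gives no proof of this observation: it is stated as a well-known characterization of a worst scenario and simply attributed to the earlier literature (Karasan et al., Aissi et al.), so there is no in-paper argument to compare against. Your argument --- bounding $Reg(x,c^x)$ from below by inserting the competitor's minimizer $y'=x^{c'}$ as a feasible (suboptimal) solution in scenario $c^x$, and then checking the sign of $\sum_i (c^x_i-c'_i)(x_i-y'_i)$ coordinate by coordinate --- is a complete and valid justification, and the four-case analysis is exactly right. A marginally shorter route, closer to what the cited works (and the paper's later Observation on $Reg(x,y)$) implicitly use, is to write $Reg(x,c)=\max_{y\in\mathcal{X}}\sum_i c_i(x_i-y_i)$ and note that for every fixed $y$ the coefficient of $c_i$ is $x_i-y_i$, which is nonnegative when $x_i=1$ and nonpositive when $x_i=0$; hence $c^x$ maximizes $Reg(x,y,c)$ over $c\in\mathcal{U}$ simultaneously for all $y$, and therefore maximizes their pointwise maximum $Reg(x,\cdot)$. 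Both arguments rest on the same sign observation; yours trades the ``uniform maximizer of a family of linear functions'' viewpoint for an explicit comparison of the two regret values, at the small cost of introducing the suboptimality step for $y'$.
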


The scenario $c^x$ belongs to the subset of \emph{extreme scenarios}: a scenario $c$ is said to be an extreme scenario if $\forall i \in \{1,\ldots,n\}$, $c_i = \underline{c}_i$ or $c_i =  \overline{c}_i$. Given an extreme scenario $c$, we will denote by $\lnot c$ the opposite extreme scenario defined by $\lnot {c}_i = \underline{c}_i$ if $ c_i = \overline{c}_i$ and $\overline{c}_i$ otherwise. While $c^x$ is the most penalizing scenario for solution $x$, $\lnot {c^x}$ is the most favorable scenario for solution $x$. 

The following observation will reveal useful in the proof of Proposition~\ref{propBestCResponse} below.


\begin{obs} \label{obs:regy}
Given two feasible solutions $x$ and $y$ in $\mathcal{X}$, $Reg(x, y)$ is achieved by both $c^x$ and $\neg c^y$: $Reg(x,y) = Reg(x,y,c^x) =Reg(x,y,\neg c^y)$.
\end{obs}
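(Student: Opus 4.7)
The proof is a direct exercise in separable linear maximization. The plan is to express the regret as a linear function of $c$, exploit the Cartesian-product structure of $\mathcal{U}$ to reduce the maximization to $n$ independent one-dimensional problems, and then verify that both $c^x$ and $\neg c^y$ agree with a coordinate-wise maximizer on every index that actually contributes to $Reg(x, y, c)$.

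First I would write $Reg(x, y, c) = val(x,c) - val(y,c) = \sum_{i=1}^n c_i (x_i - y_i)$. Since $\mathcal{U} = \times_{i=1}^n [\underline{c}_i, \overline{c}_i]$, the maximum of this sum over $c \in \mathcal{U}$ decomposes into $n$ independent maxima of affine functions on intervals. For each $i$, the coefficient $x_i - y_i$ lies in $\{-1, 0, +1\}$, which yields three cases: on an index with $x_i = 1$ and $y_i = 0$, the coefficient is $+1$ and the maximum requires $c_i = \overline{c}_i$; on an index with $x_i = 0$ and $y_i = 1$, the coefficient is $-1$ and the maximum requires $c_i = \underline{c}_i$; on an index with $x_i = y_i$, the term vanishes and the choice of $c_i$ is irrelevant.

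Next I would match these requirements against the two candidate scenarios. From Observation~\ref{obs:cresp-sol}, $c^x_i = \overline{c}_i$ iff $x_i = 1$, so on coordinates with $x_i = 1, y_i = 0$ one has $c^x_i = \overline{c}_i$ as needed, and on coordinates with $x_i = 0, y_i = 1$ one has $c^x_i = \underline{c}_i$, again as needed. From the definition of $\neg c^y$, one has $\neg c^y_i = \overline{c}_i$ iff $y_i = 0$, which gives the same two matches. Summing over coordinates and observing that the indifferent indices contribute zero in all cases, both $c^x$ and $\neg c^y$ attain $\max_{c \in \mathcal{U}} Reg(x,y,c) = Reg(x, y)$, which is the claim.

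I do not anticipate any serious obstacle; the only point worth stressing is that $c^x$ and $\neg c^y$ may well disagree as vectors on the coordinates where $x_i = y_i$, but this disagreement is harmless because those coordinates carry coefficient zero in $Reg(x, y, c)$.
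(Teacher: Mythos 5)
Your proof is correct and uses exactly the coordinate-wise maximization argument over the Cartesian product $\mathcal{U}$ that the paper implicitly relies on (it states this as an observation without proof, and the same decomposition reappears inside its proof of Proposition~\ref{propBestCResponse}). Your remark that $c^x$ and $\neg c^y$ may differ only on coordinates with $x_i = y_i$, which contribute nothing, is precisely the intended justification.
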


Observation~\ref{obs:xOracle} gives us a convenient way to determine a best response for the $x$-player against a mixed strategy of the $c$-player.

\begin{obs}
A best $x$-response to a mixed strategy $P_{\mathcal{U}}$ is given by $x^{\tilde{c}}$ where $\tilde{c}$ is defined by $\tilde{c} = \sum_{c\in \mathcal{U}} P_{\mathcal{U}}(c) c$. \label{obs:xOracle}
\end{obs}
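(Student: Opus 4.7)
The plan is to show that minimizing the expected regret against the adversary's mixed strategy $P_{\mathcal{U}}$ reduces to solving a standard optimization problem with a suitably averaged cost vector. First I would write down the quantity the $x$-player wants to minimize, namely
\[
\mathbb{E}_{C \sim P_{\mathcal{U}}}[Reg(x,C)] \;=\; \sum_{c \in \mathcal{U}} P_{\mathcal{U}}(c)\bigl(val(x,c) - val^{*}(c)\bigr).
\]
The key observation is that $val^{*}(c) = val(x^c,c)$ does not depend on $x$, so the term $\sum_{c} P_{\mathcal{U}}(c)\, val^{*}(c)$ is a constant with respect to the minimization.

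Next I would manipulate the remaining $x$-dependent term by linearity:
\[
\sum_{c \in \mathcal{U}} P_{\mathcal{U}}(c)\, val(x,c) \;=\; \sum_{c \in \mathcal{U}} P_{\mathcal{U}}(c) \sum_{i=1}^{n} c_i x_i \;=\; \sum_{i=1}^{n}\Bigl(\sum_{c \in \mathcal{U}} P_{\mathcal{U}}(c)\, c_i\Bigr) x_i \;=\; \sum_{i=1}^{n}\tilde{c}_i\, x_i,
\]
recognizing that the coefficient of $x_i$ is exactly the $i$-th coordinate of the averaged vector $\tilde{c} = \sum_{c} P_{\mathcal{U}}(c)\, c$.

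I would then conclude that minimizing the expected regret over $x \in \mathcal{X}$ is equivalent (up to the additive constant $\sum_{c} P_{\mathcal{U}}(c)\, val^{*}(c)$) to minimizing $\sum_i \tilde{c}_i x_i$ over $\mathcal{X}$, i.e., to solving the standard optimization problem for the cost vector $\tilde{c}$. By definition of $x^{\tilde{c}}$, this solution is a best $x$-response to $P_{\mathcal{U}}$. There is no real obstacle here: the argument is a direct consequence of the linearity of expectation and the fact that the $val^{*}(c)$ terms are independent of the $x$-player's choice; the only mildly subtle point is noting that this additive constant does not alter the argmin, which justifies reducing the problem to a single deterministic instance of the standard problem.
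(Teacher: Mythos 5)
Your proposal is correct and follows essentially the same route as the paper's proof: separate out the $x$-independent term $\sum_{c} P_{\mathcal{U}}(c)\, val^{*}(c)$, use linearity to rewrite the remaining term as $val(x,\tilde{c})$, and conclude that $x^{\tilde{c}}$ is a best response. Your coordinate-wise expansion of the linearity step is just a more explicit spelling-out of the paper's identity $\sum_{c} P_{\mathcal{U}}(c)\,val(x,c) = val\bigl(x,\sum_{c} P_{\mathcal{U}}(c)\,c\bigr)$.
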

\begin{proof}
It follows from the following sequence of equalities:
\begin{align*}
\min_x Reg(x,P_{\mathcal{U}}) &= \min_x \left(\sum_{c\in \mathcal{U}} P_{\mathcal{U}}(c)(val(x,c) - val^{*}(c))\right)\\
                               &= \min_x \left(\sum_{c \in \mathcal{U}}P_{\mathcal{U}}(c)val(x,c)\right) - \sum_{c \in \mathcal{U}}P_{\mathcal{U}}(c)val^{*}(c)\\
                               &= \min_x \left(val(x,\sum_{c \in \mathcal{U}}P_{\mathcal{U}}(c)c)\right) - \sum_{c \in \mathcal{U}}P_{\mathcal{U}}(c)val^{*}(c)
\end{align*}
In the last line, the second term does not depend on $x$. By definition, the solution that minimizes the first term is $x^{\tilde{c}}$.
\end{proof}

Conversely, determining a best response for the $c$-player against a mixed strategy of the $x$-player is slightly more involved. 

To give the intuition of the way a best $c$-response (i.e., a worst scenario) to a mixed strategy can be characterized, it is useful to come back to the case of a best $c$-response to a pure strategy. Given a feasible solution $x$, the worst scenario $c^x$ in Observation~\ref{obs:cresp-sol} is the one that \emph{penalizes} most $x$. Actually, one can consider a dual viewpoint where one \emph{favors} most the solution $y$ that will induce the max regret for $x$. This corresponds to scenario $\neg c^y$, and it is another worst scenario for $x$.
For the convenience of the reader, we illustrate this point on a simple example.

\begin{exam}
Consider the very simple minmax regret optimization problem defined by $n=5$, $c_1 \in [3,4]$, $c_2 \in [1,5]$, $c_3 \in [1,2]$ , $c_4 \in [3,3]$, $c_5 \in [0,6]$ and $\mathcal{X} = \{x \in \mathbb{B}^5 : \sum_{i=1}^5 x_i = 2\}$. Let $x$ be a feasible solution defined by $x_2 = x_3 = 1$ and $x_i = 0$ for $i \not\in\{2,3\}$. The worst scenario $c^x$ as defined in Observation~\ref{obs:cresp-sol} is obtained for $c_1 = 3$, $c_2 = 5$, $c_3 = 2$ , $c_4 = 3$ and $c_5 = 0$. The best feasible solution $x^{c^x}$ in scenario $c^x$ is defined by $x_3 = x_5 = 1$ and $x_i = 0$ for $i \not\in\{3,5\}$. It is then easy to see that the scenario $c$ defined by $c_i = \underline{c}_i$ if $x^{c^x}_i = 1$ and $\overline{c}_i$ otherwise yields the same regret for $x$, and is therefore also a worst scenario for $x$. In the small numerical example, it corresponds to $c_1 = 4$, $c_2 = 5$, $c_3 = 1$, $c_4 = 3$ and $c_5 = 0$. 
\end{exam}

Now, this idea can be generalized to characterize a best $c$-response to a mixed strategy $P_{\mathcal X}$, by looking for a feasible solution $y$ that maximizes $Reg(P_{\mathcal{X}},y)$:
\begin{enumerate}
\item identify a specific scenario $\tilde{c}^{P_{\mathcal{X}}}$ (to be precisely defined below, and that amounts to $c^x$ if $P_{\mathcal{X}}$ chooses $x$ with probability 1) that takes into account the expected value of $x_i$ ($i=1,\ldots,n$) given $P_{\mathcal X}$;
\item compute the best feasible solution $x^{\tilde{c}^{P_{\mathcal{X}}}}$ regarding this scenario; this is the above-mentioned solution $y$ that will induce the max regret for $P_{\mathcal{X}}$;
\item the worst scenario $c^{P_{\mathcal{X}}}$ for $P_{\mathcal X}$ is defined by $c^{P_{\mathcal{X}}}_i = \underline{c}_i$ if $x^{\tilde{c}^{P_{\mathcal{X}}}}_i = 1$ and $\overline{c}_i$ otherwise; this is indeed a scenario that maximizes the regret of choosing $P_{\mathcal{X}}$ instead of $x^{\tilde{c}^{P_{\mathcal{X}}}}$.
\end{enumerate}

This principle can be formalized as indicated in Proposition~\ref{propBestCResponse} below. Note that this result was first stated by Mastin \emph{et al.} \cite{mastin2015randomized}. For the sake of completeness, we provide a proof of the result.

\begin{prop} \label{propBestCResponse} \emph{\textbf{\cite{mastin2015randomized}}}
Given a probability distribution $P_{\mathcal{X}}$, let $x^{\tilde{c}^{P_{\mathcal{X}}}}$ be an optimal solution in scenario $\tilde{c}^{P_{\mathcal{X}}}$ defined by $\tilde{c}^{P_{\mathcal{X}}}_i = \underline{c}_i + (\overline{c}_i - \underline{c}_i)\sum_{x\in\mathcal{X}}x_i P_\mathcal{X}(x)$.
A best $c$-response to $P_{\mathcal{X}}$ is given by extreme scenario $c^{P_{\mathcal{X}}}$ where $c^{P_{\mathcal{X}}}$ is defined by $c^{P_{\mathcal{X}}}_i = \underline{c}_i$ if $x_i^{\tilde{c}^{P_{\mathcal{X}}}}=1$ and $\overline{c}_i$ otherwise. \label{prop:cOracle}
\end{prop}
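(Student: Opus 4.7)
The plan is to reduce the problem of computing a best $c$-response to $P_{\mathcal{X}}$ to a single standard optimization problem in the scenario $\tilde{c}^{P_{\mathcal{X}}}$. Writing $\bar{x}_i = \sum_{x\in\mathcal{X}} x_i P_{\mathcal{X}}(x)$ for the marginal probability that coordinate $i$ is used, linearity of the regret in $P_{\mathcal{X}}$ gives
\begin{align*}
Reg(P_{\mathcal{X}},c) = \sum_{i=1}^n c_i \bar{x}_i - val^*(c) = \max_{y\in\mathcal{X}} \sum_{i=1}^n c_i(\bar{x}_i - y_i),
\end{align*}
so finding a best $c$-response amounts to solving $\max_{c \in \mathcal{U}}\max_{y\in\mathcal{X}} \sum_i c_i(\bar{x}_i - y_i)$.

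I would then exchange the two maxima and carry out the inner maximization over $c \in \mathcal{U}$ for fixed $y \in \mathcal{X}$. Since $y_i \in \{0,1\}$ while $\bar{x}_i \in [0,1]$, the sign of $\bar{x}_i - y_i$ is entirely determined by $y_i$: it is nonnegative when $y_i=0$ and nonpositive when $y_i=1$. Consequently the inner maximum is attained at the extreme scenario $\neg c^y$, consistent with the dual viewpoint (favorable rather than penalizing scenarios) described before the proposition.

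The main technical step is the algebraic substitution of $c=\neg c^y$ into the double sum and regrouping it so that the dependence on $y$ becomes linear in $y_i$ with precisely the coefficients $\tilde{c}^{P_{\mathcal{X}}}_i$. Splitting the sum according to whether $y_i = 0$ or $y_i = 1$ and factoring out $y_i$, one should obtain
\begin{align*}
\max_{c\in\mathcal{U}}\sum_{i=1}^n c_i(\bar{x}_i - y_i) = \sum_{i=1}^n \overline{c}_i \bar{x}_i - \sum_{i=1}^n y_i\bigl(\underline{c}_i + (\overline{c}_i - \underline{c}_i)\bar{x}_i\bigr) = \sum_{i=1}^n \overline{c}_i \bar{x}_i - val(y,\tilde{c}^{P_{\mathcal{X}}}).
\end{align*}
The first term does not depend on $y$, so the remaining maximum over $y \in \mathcal{X}$ is achieved by $y^\star = x^{\tilde{c}^{P_{\mathcal{X}}}}$, an optimal solution of the standard problem in scenario $\tilde{c}^{P_{\mathcal{X}}}$.

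Backtracking through the two maximizations, a best $c$-response is $\neg c^{y^\star}$, which coincides with the extreme scenario $c^{P_{\mathcal{X}}}$ of the statement: $c^{P_{\mathcal{X}}}_i = \underline{c}_i$ if $x_i^{\tilde{c}^{P_{\mathcal{X}}}}=1$ and $\overline{c}_i$ otherwise. The only moderately delicate point I expect is the bookkeeping that produces the coefficient $\underline{c}_i + (\overline{c}_i - \underline{c}_i)\bar{x}_i$ cleanly; everything else follows from linearity of expectation and the box structure of $\mathcal{U}$.
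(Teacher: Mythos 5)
Your proposal is correct and takes essentially the same route as the paper's proof: you recast the best-response problem as a joint maximization over $y\in\mathcal{X}$ and $c\in\mathcal{U}$, fix $c=\neg c^{y}$ by the sign argument (which is exactly the content of Observation~3 that the paper invokes), and regroup the sum so that the optimal $y$ solves the standard problem with costs $\underline{c}_i + (\overline{c}_i-\underline{c}_i)\bar{x}_i = \tilde{c}^{P_{\mathcal{X}}}_i$. The algebraic identity you state checks out, so the argument is complete.
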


\begin{proof}

Instead of looking for an extreme scenario $c$ that maximizes $Reg(P_{\mathcal{X}},c)$, we look for a feasible solution $y$ that maximizes $Reg(P_{\mathcal{X}},y)$ or equivalently $Reg(P_{\mathcal{X}}, y, \lnot c^{y})$ (due to Observation~\ref{obs:regy}). 
 
The following analysis shows how to compute such a solution.  
\begin{align*}
Reg(P_{\mathcal{X}},y,\lnot c^{y}) &= \sum_{x\in\mathcal{X}} P_{\mathcal{X}}(x)(\sum_{i = 1}^n (x_i - y_i) \lnot c^{y}_i)\\
                                 &= \sum_{i = 1}^n ((\sum_{x\in\mathcal{X}} P_{\mathcal{X}}(x) x_i) - y_i) \lnot c^{y}_i\\
                                 &= \sum_{y_i = 1} ((\sum_{x\in\mathcal{X}} P_{\mathcal{X}}(x) x_i) - 1) \underline{c}_i\\
                                 &+ \sum_{y_i = 0} (\sum_{x\in\mathcal{X}} P_{\mathcal{X}}(x) x_i) \overline{c}_i
\end{align*}
Therefore if we denote by $D_i$ the contribution of having $y_i = 1$ (i.e. how having $y_i = 1$ impacts $Reg(P_{\mathcal{X}},y,\lnot c^{y})$) then:
\begin{align*}
D_i &= ((\sum_{x\in\mathcal{X}} P_{\mathcal{X}}(x) x_i) - 1) \underline{c}_i - (\sum_{x\in\mathcal{X}} P_{\mathcal{X}}(x) x_i) \overline{c}_i\\
         &= -( \underline{c}_i + (\overline{c}_i - \underline{c}_i)(\sum_{x\in\mathcal{X}} P_{\mathcal{X}}(x) x_i))
\end{align*}
Hence:
\begin{align*}
Reg(P_{\mathcal{X}},y,\lnot c^{y}) &= \sum_{y_i = 1} ((\sum_{x\in\mathcal{X}} P_{\mathcal{X}}(x) x_i) - 1) \underline{c}_i\\
                                 &+ \sum_{y_i = 0} (\sum_{x\in\mathcal{X}} P_{\mathcal{X}}(x) x_i) \overline{c}_i\\
                                 &= \sum_{y_i = 1} D_i
                                 + \sum_{i = 1}^n (\sum_{x\in\mathcal{X}} P_{\mathcal{X}}(x) x_i) \overline{c}_i
\end{align*}
As the second term of the sum does not depend on $y$ our optimization problem amounts to find $y$ that maximizes $\sum_{y_i = 1} D_i$
or equivalently that minimizes $\sum_{y_i = 1} -D_i$ which is exactly a standard optimization problem with costs $-D_i$.
\end{proof}



We will now use the following proposition from Chassein and Goerigk to show how the notion of best response can help us in designing a lower bound on $OPT$. 

\begin{prop}\textbf{\emph{\cite{DBLP:journals/eor/ChasseinG15}}}
Let $P_\mathcal{U}$ be a mixed strategy of the $c$-player, we have:
\begin{equation}
OPT \geq \underset{x\in\mathcal{X}}{\min} Reg(x, P_{\mathcal{U}})
\end{equation}
\label{pr:bound}
\end{prop}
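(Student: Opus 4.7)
The plan is to recognize this as the weak--duality (maximin $\leq$ minimax) side of the game described in Section~3 and to prove it by a short averaging argument, without invoking the minimax theorem.

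First I would rewrite $OPT$ using its definition: $OPT = \min_{x\in\mathcal{X}} Reg(x) = \min_{x\in\mathcal{X}} \max_{c\in\mathcal{U}} Reg(x,c)$. Pick any $x^\star \in \mathcal{X}$ attaining this minimum (or, to avoid the issue of whether the minimum is attained on the infinite set $\mathcal{U}$, simply pick any $x\in\mathcal{X}$ and work with $Reg(x)$ directly). The key observation is that for a fixed $x$, the quantity $Reg(x,c)$ is a function of $c$ only, and since an expectation is bounded above by a supremum, we have for any mixed strategy $P_{\mathcal U}$ over a finite support in $\mathcal U$
\begin{equation*}
Reg(x, P_{\mathcal U}) \;=\; \sum_{c\in\mathcal{U}} P_{\mathcal{U}}(c)\, Reg(x,c) \;\leq\; \max_{c\in\mathcal{U}} Reg(x,c) \;=\; Reg(x).
\end{equation*}

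Next, I would take the minimum over $x\in\mathcal{X}$ on both sides of the inequality $Reg(x, P_{\mathcal U}) \leq Reg(x)$. The left-hand side becomes $\min_{x\in\mathcal{X}} Reg(x, P_{\mathcal U})$ and the right-hand side becomes $\min_{x\in\mathcal{X}} Reg(x) = OPT$, yielding the claim
\begin{equation*}
\min_{x\in\mathcal{X}} Reg(x, P_{\mathcal U}) \;\leq\; OPT.
\end{equation*}

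I do not expect any real obstacle: the statement is the standard weak-duality inequality for a zero-sum game in which the $c$-player commits to a mixed strategy first, so the $x$-player can only do better (lower regret in expectation) than when facing the adversarial worst-case scenario. The only minor care needed is to confirm that the linear extension of $Reg(\cdot,\cdot)$ to mixed strategies coincides with the expectation $\sum_c P_{\mathcal{U}}(c) Reg(x,c)$, which is exactly the definition given at the start of Section~3. No appeal to the minimax theorem, to existence of a Nash equilibrium, or to any oracle result is required.
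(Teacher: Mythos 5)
Your proof is correct, and it is essentially the standard argument: the paper itself does not spell out a proof but defers to Lemma 3.2 of Chassein and Goerigk, which rests on exactly the same averaging step (the expected regret under any distribution over scenarios is at most the maximal regret, so evaluating at a minmax regret solution $x^\star$ gives $\min_{x\in\mathcal{X}} Reg(x,P_{\mathcal{U}}) \leq Reg(x^\star,P_{\mathcal{U}}) \leq Reg(x^\star) = OPT$). No gap to report.
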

\begin{proof}
See Lemma 3.2 in \cite{DBLP:journals/eor/ChasseinG15}.
\end{proof}

In other words, the expected regret of the $x$-player when playing a best response to a mixed $c$-strategy is always a lower bound on $OPT$.
The lower bound we study in this paper, $LB^*$, is the best possible lower bound of the type described by proposition \ref{pr:bound}.  More formally :
\begin{equation}
    LB^* = \underset{P_{\mathcal{U}} \in \Delta_{\mathcal{U}}}{\max} \underset{x\in \mathcal{X}}{\min} Reg(x,P_{\mathcal{U}})
\end{equation}

\subsection{Relation with Chassein and Goerigk's bound}
\label{sec:secCG}
Let us denote by $LB_{CG}$ the lower bound designed by Chassein and Goerigk. 
In order to compare $LB_{CG}$ with $LB^*$, we now provide a game theoretic interpretation of $LB_{CG}$.

\begin{defi}
A mixed $c$-strategy $P_{\mathcal{U}}$ is said to be \emph{centered} if:
\begin{equation}
\forall i \in \{1,\ldots,n\}, \sum_{c\in\mathcal{U}} P_\mathcal{U}(c)c_i = \hat{c}_i = \frac{\underline{c}_i + \overline{c}_i}{2}
\end{equation}
\end{defi}

Put another way, the previous definition means that, in a centered strategy, the expected cost of each element $i$ is the arithmetic mean of $\underline{c}_i$ and $\overline{c}_i$.

\begin{defi}
The \emph{support} of a mixed strategy $P$ is the set of pure strategies that are played by $P$ with a non-zero probability.
\end{defi}

The lower bound given by Chassein and Goerigk \cite{DBLP:journals/eor/ChasseinG15} is the best possible lower bound of the type described by proposition \ref{pr:bound} when the $c$-player only considers the restricted set $\hat{\Delta}_\mathcal{U}$ of centered mixed strategies with two equally likely extreme scenarios $c$ and $\lnot c$ as a support (which guarantees that the mixed strategy is centered because $(c_i + \lnot c_i)/2$ $=$ $(\underline{c}_i + \overline{c}_i)/2$). We can now write more formally the lower bound designed by Chassein and Goerigk as:
\begin{align}
  LB_{CG} = \underset{P_{\mathcal{U}} \in \hat{\Delta}_{\mathcal{U}}}{\max} \underset{x\in \mathcal{X}}{\min} Reg(x,P_{\mathcal{U}})
\end{align}

As $\hat{\Delta}_{\mathcal{U}} \subset \Delta_{\mathcal{U}} $, $LB^*$ will always be at least as accurate as $LB_{CG}$, i.e. $LB^* \ge LB_{CG}$. 
Another direct consequence of this observation is that $LB^* \ge Reg(x^{\hat{c}})/2$ because Chassein and Goerigk proved that:
\[
\frac{OPT}{LB_{CG}} \le \frac{Reg(x^{\hat{c}})}{LB_{CG}} \le 2
\]
We now provide a very simple example where $LB^* > LB_{CG}$.

\begin{exam}
Consider the minmax regret optimization problem defined by $n=2$, $c_1 \in [5,10]$, $c_2 \in [7,12]$, and $\mathcal{X} = \{x \in \mathbb{B}^2 : x_1 + x_2 = 1\}$, where $c_1 x_1 + c_2 x_2$ is the value (to minimize) of a feasible solution $x$ in scenario $c$. There are two feasible solutions $x = (1,0)$ and $x = (0,1)$, and four extreme scenarios $c=(5,12)$, $c=(10,7)$, $c=(5,7)$ and $c=(10,12)$, where $x=(a,b)$ (resp. $c=(a,b)$) means that $x_1=a, x_2=b$ (resp. $c_1=a, c_2=b$).

In this problem, set $\hat{\Delta}_\mathcal{U}$ consists of two mixed strategies, namely:
\begin{itemize}
\item strategy $A$: play $c=(5,12)$ with probability $1/2$, and play $c=(10,7)$ with probability $1/2$;
\item strategy $B$: play $c=(5,7)$ with probability $1/2$, and play $c=(10,12)$ with probability $1/2$.
\end{itemize} 
Regardless of which strategy $A$ or $B$ is played by the $c$-player, the best response for the $x$-player is to play $x=(1,0)$. This yields an expected regret of $(1/2) \times 0 + (1/2) \times 3 = 3/2$ (resp. $0$) for the $x$-player if strategy $A$ (resp. $B$) is played by the $c$-player. Therefore, for this problem instance, $LB_{CG} = \max\{3/2,0\} = 3/2$. 

Now, for bound $LB^*$, consider the (non-centered) mixed strategy consisting in playing scenario $c = (5,12)$ with probability 0.3  and scenario $c=(10,7)$ with probability 0.7. This mixed strategy yields an expected regret of $2.1$ regardless of the fact that the $x$-player plays strategy $x= (1,0)$ (expected regret of $0.3 \times 0 + 0.7 \times 3$) or strategy $x = (0,1)$ (expected regret of $0.3 \times 7 + 0.7 \times 0$). Thus $LB^*\ge 2.1 > LB_{CG}$.

\end{exam}

\subsection{Relation with Nash equilibrium}

We show here that determining $LB^*$ amounts to identifying a mixed Nash equilibrium of the game induced on feasible solutions and scenarios. First, it is well-known in game theory that $\min_{x \in \mathcal{X}} Reg(x,P_{\mathcal{U}}) = \min_{P_{\mathcal{X}} \in \Delta_{\mathcal{X}}} Reg(P_{\mathcal{X}},P_{\mathcal{U}})$ (i.e., there always exists a best response that is a pure strategy). Therefore our lower bound can be rewritten as:
\[
LB^* = \underset{P_{\mathcal{U}} \in \Delta_{\mathcal{U}}}{\max} \underset{P_{\mathcal{X}} \in \Delta_{\mathcal{X}}}{\min} Reg(P_{\mathcal{X}},P_{\mathcal{U}})
\]
A mixed Nash equilibrium is a couple $(P_{\mathcal{X}},P_{\mathcal{U}})$ such that:
\[
\underset{P_{\mathcal{U}} \in \Delta_{\mathcal{U}}}{\max} \underset{P_{\mathcal{X}} \in \Delta_{\mathcal{X}}}{\min} Reg(P_{\mathcal{X}},P_{\mathcal{U}}) = \underset{P_{\mathcal{X}} \in \Delta_{\mathcal{X}}}{\min} \underset{P_{\mathcal{U}} \in \Delta_{\mathcal{U}}}{\max} Reg(P_{\mathcal{X}},P_{\mathcal{U}})
\]
Thus, the lower bound $LB^*$ corresponds to the value of a mixed Nash Equilibrium (NE). The next theorem states that a mixed Nash equilibrium always exists in the game induced by feasible solutions and scenarios.

\begin{thrm}
There exists a (possibly mixed) Nash equilibrium in the zero-sum two-person game induced by the feasible solutions and the possible scenarios.
\end{thrm}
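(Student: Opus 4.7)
The plan is to reduce the problem to a finite two-player zero-sum game and then invoke the classical minimax theorem of von Neumann. The $x$-player already has only finitely many pure strategies, since $\mathcal{X}\subset \mathbb{B}^n$ is finite. The main obstacle is the $c$-player, whose pure strategy set $\mathcal{U}=\times_{i=1}^n [\underline{c}_i,\overline{c}_i]$ is an infinite (in fact continuous) compact set, so a naive application of the finite minimax theorem does not work.

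First I would show that the $c$-player can be restricted, without loss of generality, to the finite set $\mathcal{U}_{ext}$ of extreme scenarios (the $2^n$ vertices of the box $\mathcal{U}$). The key observation is that for any mixed $x$-strategy $P_\mathcal{X}$, the map $c \mapsto Reg(P_\mathcal{X},c) = \sum_{x} P_\mathcal{X}(x)\big(val(x,c)-val^{*}(c)\big)$ is a difference of a function linear in $c$ and the concave function $c \mapsto val^{*}(c) = \min_{y\in\mathcal{X}} val(y,c)$, so it is convex in $c$ on the box $\mathcal{U}$. Maximizing a convex function over a box, the optimum is attained at a vertex; this is exactly the content of Proposition~\ref{propBestCResponse}, which identifies an extreme scenario as a best $c$-response to any mixed $P_\mathcal{X}$. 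Hence restricting the $c$-player to $\mathcal{U}_{ext}$ preserves both the value $\max_{P_\mathcal{U}}\min_{P_\mathcal{X}} Reg(P_\mathcal{X},P_\mathcal{U})$ and $\min_{P_\mathcal{X}}\max_{P_\mathcal{U}} Reg(P_\mathcal{X},P_\mathcal{U})$.

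After this reduction, the two players face a finite zero-sum matrix game with payoff matrix $M$ indexed by $\mathcal{X}\times\mathcal{U}_{ext}$ and entries $M_{x,c}=Reg(x,c)$. Von Neumann's minimax theorem then yields a pair of mixed strategies $(P_\mathcal{X}^*, P_\mathcal{U}^*)$, supported respectively on $\mathcal{X}$ and $\mathcal{U}_{ext}$, such that $\max_{P_\mathcal{U}}\min_{P_\mathcal{X}} Reg(P_\mathcal{X},P_\mathcal{U}) = \min_{P_\mathcal{X}}\max_{P_\mathcal{U}} Reg(P_\mathcal{X},P_\mathcal{U})$ in the restricted game. Lifting this back via the observation of the previous paragraph shows that $(P_\mathcal{X}^*, P_\mathcal{U}^*)$ remains a Nash equilibrium in the original game over $\mathcal{X}\times \mathcal{U}$, since no player has a profitable deviation even in the unrestricted strategy set.

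The only real technical step is the reduction to extreme scenarios; everything else is bookkeeping around the finite minimax theorem. An alternative, if one preferred to avoid the extreme-scenario argument, would be to invoke Glicksberg's generalization of the minimax theorem for continuous games on compact strategy spaces with continuous payoff — applicable since $\mathcal{X}$ is finite (hence compact and discrete) and $Reg(x,\cdot)$ is continuous in $c$ — but the finite reduction is more in line with the rest of the paper, which works with extreme scenarios throughout (see Proposition~\ref{propBestCResponse} and its use in the double oracle algorithm to come).
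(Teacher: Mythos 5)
Your proof is correct and follows essentially the same route as the paper: restrict the $c$-player to the finitely many extreme scenarios, apply von Neumann's minimax theorem to the resulting finite zero-sum game, and observe that the equilibrium persists in the original game because a best $c$-response can always be taken to be an extreme scenario. Your explicit convexity argument (linear minus concave $val^{*}$) just makes precise what the paper delegates to Proposition~\ref{propBestCResponse}, so no substantive difference.
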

\begin{proof}
Consider the game where the pure strategies of the $c$-player are restricted to the extreme scenarios. The number of feasible solutions and extreme scenarios is finite. Therefore, the game is finite. In a finite zero-sum two-player game, the von Neumann's minimax theorem insures that there exists a (possibly mixed) Nash equilibrium. It is then easy to realize that this NE is also an NE of the original game. Indeed, as a best $c$-response can always be found as an extreme scenario, the $c$-player will not find a better response in the original game than the one she is playing in the restricted game.  
\end{proof}

Determining such a mixed Nash equilibrium can be done by linear programming \cite{Chvatal83}. Indeed, consider a zero-sum two-person game where player 1 (resp. player 2) has $k$ (resp. $l$) pure strategies and denote by $A_{ij}$ the payoff of player 1 when strategies $i$ and $j$ are played respectively by player 1 and player 2. Then, a mixed Nash equilibrium of this game can be determined by solving the following Linear Program (LP), where $p_i$ denotes the probability that player 1 plays pure strategy $i$:
\begin{align}
& \min_{v,p_1,\ldots,p_k} v \notag \\
v &\geq \sum_{i=1}^k p_i A_{ij} \qquad  \forall j \in \{1,\ldots,l\} \label{cPureStrat}\\
 \sum_{i=1}^k p_i &= 1 \notag \\
p_i &\geq 0 \qquad \forall i \in \{1,\ldots,k\} \notag
\end{align}

Constraints~(\ref{cPureStrat}) insure that $v$ is the maximum payoff that player 2 can achieve with her pure strategies (and therefore with \emph{any} strategy, because there always exists a best response that is a pure strategy) if player 1 plays the mixed strategy induced by $p_1,\ldots,p_k$. An optimal mixed strategy for player 1 is then given by the optimal values of variables $p_1,\ldots,p_k$ (an optimal strategy for player 2 is given by the optimal dual variables) and $v$ gives the value of the game (which is $LB^*$ in our case). 

Nevertheless, for the game considered here between the $x$-player and the $c$-player, the complete linear program, that involves as many variables as there are pure strategies for both players, would be huge, and it is therefore not worth considering its generation in extension. To tackle this issue, Mastin \textit{et al.} \cite{mastin2015randomized} relies on a cutting-plane method.

To give a better insight of their approach it reveals useful to revise the definition of the set of pure strategies of the $c$-player. Note that, when defining the game between the $x$-player and the $c$-player, the only pure strategies that matter are those that are best responses to some mixed strategy of the adversary. Thus, an important consequence of Observation~\ref{obs:regy} is that the set of pure strategies of the $c$-player can be restricted to the set $\{\neg c^y:y \in \mathcal{X}\}$, i.e. the $c$-player chooses a solution $y$ and set $c_i = \underline{c}_i$ if $y_i=1$, and $c_i = \overline{c}_i$ otherwise\footnote{Actually, it can even be restricted to the set $\{\neg c^y:y \in \mathcal{X} \mbox{ and } y \mbox{ optimal for } \neg c^y\}$.}. Put another way, the ``relevant'' pure strategies of the $c$-player can be directly characterized as feasible solutions $y$ (each feasible solution $y$ corresponds to scenario $\neg c^y$).

 
Given a feasible solution $y \in \mathcal{X}$, the corresponding scenario $\neg c^y$ is compactly defined by $c_i = \underline{c}_i y_i + \overline{c}_i (1-y_i)$ for $i \in \{1,\ldots,n\}$. Hence, 
\begin{align*}
Reg(P_{\mathcal{X}},y) &= \sum_{i=1}^n (\underline{c}_i y_i + \overline{c}_i (1-y_i)) t_i - \sum_{i=1}^n \underline{c}_i y_i\\
                       &= \sum_{i=1}^n \overline{c}_i t_i - \sum_{i=1}^n  y_i(\underline{c}_i + t_i(\overline{c}_i-\underline{c}_i)) 
\end{align*}
where $t_i  = \sum_{x \in \mathcal{X}} x_i P_{\mathcal{X}}(x)$ denotes the probability that element $i$ belongs to the solution that is drawn according to $P_{\mathcal{X}}(x)$. Note that equation $t_i  = \sum_{x \in \mathcal{X}} x_i P_{\mathcal{X}}(x)$ defines a mapping from $\Delta_{\mathcal{X}}$ to the convex hull $CH(\mathcal{X})$ of $\mathcal{X}$, defined by:
\[
CH(\mathcal{X}) = \left\{\displaystyle\sum_{x \in \mathcal{X}} p_x x : p_x \ge 0 \mbox{ and } \displaystyle\sum_{x \in \mathcal{X}} p_x=1\right\}.
\]
Conversely, from any vector $t$ in the convex hull of $\mathcal{X}$, it is possible to compute in polynomial time a mixed strategy $P_{\mathcal{X}}$ such that  $\sum_{x \in \mathcal{X}} x_i P_{\mathcal{X}}(x) = t_i$ \cite{mastin2015randomized}. 

This observation led Mastin \textit{et al.} to consider the following LP for Robust Optimization (RO) \cite{mastin2015randomized}, where constraints~(\ref{eq:cstexpRO}) are the analog of constraints~(\ref{cPureStrat}) above:
\begin{empheq}[left=\mathcal{P}_{RO}\empheqlbrace]{align}
& \min_{v,t_1,\ldots,t_n} v  \notag \\
v \ge \sum_{i=1}^n \overline{c}_i t_i & - \sum_{i=1}^n  y_i(\underline{c}_i + t_i(\overline{c}_i-\underline{c}_i)) \qquad  \forall y \in \mathcal{X} \label{eq:cstexpRO} \\
t & \in  CH(\mathcal{X}) \notag 
\end{empheq}

This formulation takes advantage of the fact that if one can optimize over $\mathcal{X}$ in polynomial time, then one can separate over $CH(\mathcal{X})$ in polynomial time by the polynomial time equivalence of separation and optimization \cite{Grotschel81}. Furthermore, even if the number of constraints in (\ref{eq:cstexpRO}) may be exponential as there is one constraint per feasible solution $y\in \mathcal{X}$, it follows from Proposition \ref{propBestCResponse} that these constraints can be handled efficiently by using a separation oracle computing an optimal solution in scenario defined by $c_i = \underline{c}_i + t_i(\overline{c}_i-\underline{c}_i)$. This separation oracle is also polynomial time if one can optimize over $\mathcal{X}$ in polynomial time. As noted by Mastin et al. \cite{mastin2015randomized}, it implies that the complexity of determining a mixed NE is polynomial if the standard version of the tackled problem is polynomially solvable.

An equivalent formulation to $\mathcal{P}_{RO}$, that turned out to be more efficient in practice in our numerical tests, reads as follows, where $\pi$ is the value of an optimal solution in the scenario defined by $c_i = \underline{c}_i + t_i(\overline{c}_i-\underline{c}_i)$:
\begin{empheq}[left=\mathcal{P}_{RO}'\empheqlbrace]{align}
& \min_{\pi,t_1,\ldots,t_n} \sum_{i=1}^n \overline{c}_i t_i - \pi  \notag \\
\sum_{i=1}^n  y_i(\underline{c}_i& + t_i(\overline{c}_i-\underline{c}_i)) \geq \pi  \qquad  \forall y \in \mathcal{X} \label{eq:cstexp} \\
t &\in CH(\mathcal{X}) \notag 
\end{empheq}

In the next section, we will present another algorithm able to solve the game induced by the feasible solutions and the possible scenarios. While the method of Mastin \textit{et al.} \cite{mastin2015randomized} dynamically generates the strategies of the $c$-player (strategies $\neg c^y$ for $y \in \mathcal{X}$) and uses an implicit representation for the strategies of the $x$-player (a mixed strategy $P_\mathcal{X}$ is induced from $t \in CH(\mathcal{X})$), our algorithm dynamically generates both the strategies of the $c$-player and the ones of the $x$-player. As we will see, our alternative method is much more efficient on some classes of instances.



\section{Solving the Game} 
\label{sec:optimal}

The game can be solved by specifying a double oracle algorithm \cite{McMahanGB03} adapted to our problem.

\textbf{Double oracle approach.} The double oracle algorithm finds a Nash equilibrium for a finite zero-sum two player game where a best response procedure (also called oracle) exists for each player. 
Given a mixed strategy $P_\mathcal{X}$ (resp. $P_\mathcal{U}$), $BR_c(P_\mathcal{X})$ (resp. $BR_x(P_\mathcal{U})$) returns a pure strategy $c$ (resp. $x$) that maximizes $Reg(P_\mathcal{X},c)$ (resp. minimizes $Reg(x,P_\mathcal{U})$).
Those two procedures directly follow from Proposition \ref{prop:cOracle} and Observation \ref{obs:xOracle}. Each procedure only requires to solve one standard optimization problem.
The algorithm starts by considering only small subsets $S_x$ and $S_c$ of pure strategies (singletons in Algorithm~\ref{alg:doa1}) for the $x$-player and the $c$-player, and then grows those sets in every iteration 
by applying the best-response oracles to the optimal strategies (given by the current NE) the players can play in the restricted game $G=(S_x,S_c,Reg)$. 
At each iteration of the algorithm, an NE of $G$ is computed by linear programming (note that the game considered here is not the whole game but a restricted one). Execution continues until convergence is detected. Convergence is achieved when the best-response oracles generate pure strategies that are already present in sets $S_x$ and $S_c$. In other words, convergence is obtained if for the current NE both players cannot improve their strategies by looking outside of the restricted game. One can also notice that at each step, due to Proposition \ref{pr:bound}, the regret obtained by $x = BR_x(P_\mathcal{U})$ against the mixed strategy $P_\mathcal{U}$ is always a lower bound on $OPT$. Therefore we can hope that a good lower bound on $OPT$ can be obtained before convergence and after only a few iterations of the algorithm.  

\begin{algorithm}[]
\DontPrintSemicolon
\KwData{Feasible solutions $\mathcal{X}$ and possible scenarios $\mathcal{U}$, 
singletons $S_x=\{x\}$ including an arbitrary feasible solution and $S_c=\{c\}$ including an arbitrary extreme scenario}
\KwResult{a (possibly mixed) NE}
converge $\leftarrow$ False\\
\While{converge is False}{
Find Nash equilibrium $(P_\mathcal{X},P_\mathcal{U}) \in G = (S_x,S_c,Reg)$\\
Find $x = BR_x(P_\mathcal{U})$ and $c = BR_c(P_\mathcal{X})$ \\
\lIf{$x\in S_x$ and $c\in S_c$}{converge $\leftarrow$ True}
\lElse 
{add $x$ to $S_x$ and $c$ to $S_c$}}
\Return $(P_\mathcal{X},P_\mathcal{U})$  

\caption{Double Oracle Algorithm}
\label{alg:doa1}
\end{algorithm}

The correctness of best-response-based double oracle algorithms for finite two-player zero-sum games has been established by McMahan et al \cite{McMahanGB03}; 
the intuition for this correctness is as follows. Once the algorithm converges, the current solution must be an equilibrium of the game, because each player's current strategy
is a best response to the other player's current strategy. This stems from the fact that the best-response oracle, which searches over all possible strategies, cannot find anything better. Furthermore, the algorithm must converge, because at worst, it will generate all pure strategies. 

Note that it is guaranteed that feasible solutions that are not \emph{weak} will never be generated by the double oracle algorithm. A feasible solution is said to be weak if it is optimal for some scenario. Subsequently, an element $i\in \{1,\ldots,n\}$ is said to be a weak element if it is part of a weak solution. Because each oracle solves a standard optimization problem for a specific scenario, a feasible solution that is not  weak will never be generated. A preprocessing step could therefore be done by removing non-weak elements. Note however that the difficulty of this preprocessing step is highly dependent on the problem under study. For instance, while deciding whether an edge is weak or not can be done in polynomial time for the robust spanning tree problem \cite{Yaman99minimumspanning}, it has been shown to be NP-complete for the robust shortest path problem \cite{KPY01}.

According to the problem instance and the uncertainty set over costs $\mathcal{U}$, the set of generated strategies for both players may either 1) remain relatively small or on the contrary 2) become very large. Therefore, the running time of the double oracle algorithm will be highly dependent on the fact that the problem instance belongs to case 1) or 2). Our intuition is that, for real world problems involving a small number of feasible solutions that deserve to be considered, the problem instance will yield case 1).
Secondly, for instances yielding the second situation we expect the double oracle algorithm to provide a good lower bound on $OPT$ after only a few iterations of the algorithm.

\section{Adapting the lower bound for a branch and bound procedure}
\label{sec:bb}
We now show how our lower bound can be adapted to be used in a branch and bound procedure for determining a minmax regret (pure) solution.
It is well-known that two key ingredients for a successful branch and bound procedure are: i) the accuracy of the lower bound, ii) the efficiency of its computation \cite{conforti2014integer}. In order to speed up the computation of the lower bound at each node, one can take advantage of the information obtained at the father node. This is the topic of this section.

In the branch and bound we use in the sequel of the paper, a node, defining a restricted set $\mathcal{X}' \subseteq \mathcal{X}$ of feasible solutions, is characterized by a couple $(IN(\mathcal{X'})$, $OUT(\mathcal{X'}))$, where $IN(\mathcal{X}')\subseteq \{i \in \{1,\ldots,n\}: x_i = 1,  \forall x \in \mathcal{X}'\}$ is the set of all elements that are enforced to be part of every feasible solution in $\mathcal{X}'$, and $OUT(\mathcal{X}') \subseteq \{i \in \{1,\ldots,n\}: x_i = 0, \forall x \in \mathcal{X}'\}$ is the set of all elements that are forbidden in any feasible solution of $\mathcal{X}'$. The branching scheme consists in partitioning $\mathcal{X}'$ into two sets by making mandatory or forbidden an element $k \in \{1,\ldots,n\}\setminus (IN(\mathcal{X'})\cup OUT(\mathcal{X'}))$:
\begin{itemize}
\item the first child is the set $\mathcal{X}'' \subseteq \mathcal{X}'$ characterized by $IN(\mathcal{X}'')=IN(\mathcal{X}')\cup \{k\}$ and $OUT(\mathcal{X}'')=OUT(\mathcal{X}')$,
\item the second child is the set $\mathcal{X}''' \subseteq \mathcal{X}'$ characterized by $IN(\mathcal{X}''')=IN(\mathcal{X}')$ and $OUT(\mathcal{X}''')=OUT(\mathcal{X}')\cup \{k\}$.
\end{itemize}
At a node $\mathcal{X}'$ of the branch and bound tree, the computation of the lower bound amounts to determine:
\begin{equation}
    \underset{P_{\mathcal{U}} \in \Delta_{\mathcal{U}}}{\max} \underset{x\in \mathcal{X'}}{\min} Reg(x,P_{\mathcal{U}})
\end{equation}


Our double oracle approach will be unchanged except that given a probability distribution $P_\mathcal{U}$ over extreme scenarios, the best response procedure of the  $x$-player will now return the best possible response in $\mathcal{X}'$. The double oracle algorithm will hence generate a restricted game $G_{\mathcal{X}'}\!=\! (S_x,S_c,Reg)$ where all strategies in $S_x$ are in $\mathcal{X}'$.

Two things can be noted to speed up the computation of the lower bound at nodes $\mathcal{X}''$ and $\mathcal{X}'''$. To initialize the set of feasible solutions in $G_{\mathcal{X}''}$ and $G_{\mathcal{X}'''}$, we partition the set $S_x$ of generated solutions in $G_{\mathcal{X}'}$ into two sets $S_x^k = \{x \in S_x : x_k = 1\}$ and $S_x^{\overline{k}}=\{x\in S_x : x_k = 0\}$. The former is used as initial set in $G_{\mathcal{X}''}$ and the latter in $G_{\mathcal{X}'''}$.
The handling of set $S_c$ differs. The set of generated scenarios is indeed not dependent on the explored search node as the branching scheme does not add constraints on the scenarios. Therefore at each new node the current set $S_c$ of all scenarios generated so far in the course of the branch and bound is the new initial set of scenarios in Algorithm \ref{alg:doa1}. 

\section{Application to the robust shortest path problem}

In this section we illustrate how our work can be specified for the robust shortest path problem \cite{Montemanni:2004:EAR:1007797.1007804,DBLP:journals/4or/MontemanniG05,DBLP:journals/orl/MontemanniGD04,KPY01}, regarding both the computation of $LB^*$ and the determination of a minmax regret path. For comparison, we describe some other approaches in the literature that have been used for these problems (computation of $LB^*$ and determination of a minmax regret path).

\subsection{Problem description}
In the robust shortest path problem, a directed graph $\mathcal{G}=(\mathcal{V},\mathcal{E})$ is given where $\mathcal{V}$ is a set of vertices, numbered from 1 to $|\mathcal{V}|$, and $\mathcal{E}$ is a set of edges. A starting vertex $s \in \mathcal{V}$ and a destination vertex $t\in\mathcal{V}$ are given and an interval  $[\underline{c}_{ij},\overline{c}_{ij}] \subset \mathbb{R}^+$ is associated to each edge $(i,j)\in \mathcal{E}$, where $\underline{c}_{ij}$ (resp. $\overline{c}_{ij}$) is a lower (resp. upper) bound on the cost induced by edge $(i,j)$. The set $\mathcal{U}$ of possible scenarios is then defined as the Cartesian product of these intervals: $\mathcal{U} = \times_{(i,j) \in \mathcal{E}} [\underline{c}_{ij},\overline{c}_{ij}]$. Without loss of generality, we assume that $s=1$ and $t=|\mathcal{V}|$. The set of all paths from $s$ to $t$ is denoted by $\mathcal{X}$. An example of a graph with interval data is shown in Figure~\ref{fig:intervalgraph}. In the robust shortest path problem we consider here, one wishes to determine a path $x \in \mathcal{X}$ that minimizes $\max_{c \in \mathcal{U}} val(x,c) - val^*(c)$, where $val(x,c)$ is the value of path $x$ in scenario $c$ and $val^*(c)$ is the value of a shortest path in scenario $c$.
\begin{figure}[!h] 
   \centering
    \scalebox{.9}{\begin{tikzpicture}[->,>=stealth',shorten >=1pt,auto,node distance=2.7cm,
                    semithick]
  \tikzstyle{every state}=[circle,draw,text=black]

  \node[state] (B)                    {$1$};
  \node[state] (C) [above right of=B] {$2$};
  \node[state] (D) [below right of=B] {$3$};
  \node[state] (E) [right of=C] {$4$};
  \node[state] (F) [right of=D] {$5$};
  \node[state] (T) [below right of=E] {$6$};

  \path (B) edge node {$[2,4]$} (C)
        (B) edge node {$[3,5]$} (D)
        (C) edge node {$[1,2]$} (D)
        (C) edge node {$[1,4]$} (E)
        (D) edge [right] node {$[2,3]$} (E)
	(D) edge node {$[2,3]$} (F)
	(E) edge node {$[2,3]$} (T)
        (F) edge node {$[1,2]$} (T)
         ;
\end{tikzpicture}}
    \caption{An example of a graph with interval data.}
    \label{fig:intervalgraph}
\end{figure}
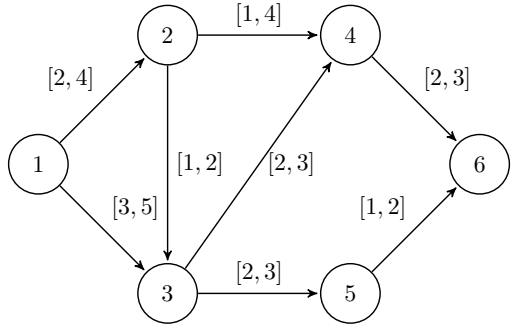

\subsection{Various approaches for computing $LB^*$}

In the experiments, we will compare the running times of our double oracle method for computing $LB^*$ with the following alternative approaches.

\subsubsection{LP formulation by Karasan et al.} 

The Robust Shortest Path problem (RSP) in the interval model was first studied by Karasan et al. \cite{KPY01} who provided the following mixed integer linear program $\mathcal{P}_{KPY}$ (from now on, we adopt the convention to use the initials of the authors as subscripts):   
\begin{empheq}[left=\mathcal{P}_{KPY}\empheqlbrace]{align}
\min& \sum_{(i,j)\in \mathcal{E}} \overline{c}_{ij}x_{ij}-\pi_{|\mathcal{V}|} \notag\\
\text{subject to}& \notag \\  
\pi_j \leq & \text{ } \pi_i + \underline{c}_{ij} + (\overline{c}_{ij} - \underline{c}_{ij})x_{ij} \text{, }\forall(i, j) \in \mathcal{E} \notag \\
b_j =&- \sum_{(i,j)\in\mathcal{E}}x_{ij} + \sum_{(j,k)\in\mathcal{E}}x_{jk} \notag \\
\pi_1 =& \text{ } 0 \notag \\
\pi_j \geq&\text{ }  0\text{, } j\in \{1,2,\ldots,|\mathcal{V}|\} \notag \\
x_{ij} \in& \{0,1\}& \notag 
\end{empheq}
where $b_j = 1$ if $j=1$, $b_j = -1$ if $j=|\mathcal{V}|$, and $b_j  = 0$ otherwise. Variable $x_{ij}$ indicates if edge $(i,j)$ belongs to the minmax regret solution and variable $\pi_i$ denotes the distance between $s$ and $i$ under the worst possible scenario regarding the solution path characterized by variables $x_{ij}$. We denote by $\widehat{\mathcal{P}}_{KPY}$ the relaxed version of this mixed integer linear program where $x_{ij} \ge 0$. One can notice that this relaxed version solves exactly the same problem as our double oracle algorithm:

\begin{prop}
Let $\widehat{LB}$ denote the optimal value of $\widehat{\mathcal{P}}_{KPY}$, then $\widehat{LB} = LB^*$.
\end{prop}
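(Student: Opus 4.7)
The plan is to recognize $\widehat{\mathcal{P}}_{KPY}$ as $\mathcal{P}_{RO}'$ specialized to the robust shortest path problem, up to a standard LP reformulation of the inner shortest-path distance. First, I would specialize $\mathcal{P}_{RO}'$ to this case (so $\mathcal{X}$ is the set of indicator vectors of simple $s$-$t$ paths and, writing $c_{ij}(t) := \underline{c}_{ij} + t_{ij}(\overline{c}_{ij}-\underline{c}_{ij})$). The family of constraints $\sum_{(i,j)\in\mathcal{E}} y_{ij}\, c_{ij}(t) \geq \pi$ for all $y\in\mathcal{X}$ expresses precisely that $\pi$ is at most the shortest-path distance from $1$ to $|\mathcal{V}|$ under the nonnegative costs $c(t)$. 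By the standard LP duality for shortest paths (applicable because $c(t)\geq 0$), this can be encoded compactly by introducing vertex potentials $\pi_v$ with $\pi_j - \pi_i \leq c_{ij}(t)$ for all $(i,j)\in\mathcal{E}$, $\pi_1 = 0$, $\pi_v \geq 0$, and identifying $\pi$ with $\pi_{|\mathcal{V}|}$. Substituting these constraints into $\mathcal{P}_{RO}'$ and renaming $t$ to $x$, one obtains an LP with the same objective and the same potential-related constraints as $\widehat{\mathcal{P}}_{KPY}$; the only discrepancy is that the $x$-variables are constrained to lie in $CH(\mathcal{X})$ rather than in the unit $s$-$t$ flow polytope $\mathcal{F}$ defined by the flow-conservation and nonnegativity constraints of $\widehat{\mathcal{P}}_{KPY}$.

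Next I would show that enlarging the feasible set from $CH(\mathcal{X})$ to $\mathcal{F}$ does not change the optimum. Since every $s$-$t$ path is a unit $s$-$t$ flow, $CH(\mathcal{X}) \subseteq \mathcal{F}$, and hence $\widehat{LB} \leq LB^*$ directly. For the reverse inequality, I would invoke the flow decomposition theorem: any $x \in \mathcal{F}$ may be written $x = \tilde x + z$, where $\tilde x \in CH(\mathcal{X})$ is a convex combination of $s$-$t$ paths and $z \geq 0$ is a nonnegative combination of directed cycles. Writing $\pi^*(y) := SP_{c(y)}(1,|\mathcal{V}|)$ for the optimal inner value of the potentials at a fixed $y$, it suffices to prove the inequality $\pi^*(x) - \pi^*(\tilde x) \leq \sum_{(i,j)} \overline{c}_{ij}\, z_{ij}$. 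Rearranged, this says that the objective at $(\tilde x, \pi^*(\tilde x))$ is no larger than at $(x, \pi^*(x))$, so an optimum of $\widehat{\mathcal{P}}_{KPY}$ is always attained within $CH(\mathcal{X})$ and therefore equals $LB^*$.

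The main step of work is this inequality, which I would establish by evaluating one specific path on the shifted costs. Let $y^*$ be any simple $s$-$t$ path optimal under $c(\tilde x)$, so $\pi^*(\tilde x) = \sum y^*_{ij}\, c_{ij}(\tilde x)$. Since $c_{ij}(x) - c_{ij}(\tilde x) = z_{ij}(\overline{c}_{ij} - \underline{c}_{ij})$, evaluating $y^*$ on $c(x)$ yields $\pi^*(x) \leq \sum y^*_{ij}\, c_{ij}(x) = \pi^*(\tilde x) + \sum y^*_{ij}\, z_{ij}(\overline{c}_{ij} - \underline{c}_{ij})$. Since $y^*_{ij}\in\{0,1\}$, $z_{ij}\geq 0$, and $\underline{c}_{ij}\geq 0$, the last sum is bounded above by $\sum z_{ij}(\overline{c}_{ij} - \underline{c}_{ij}) \leq \sum z_{ij}\, \overline{c}_{ij}$, giving the required inequality. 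Combining both directions then yields $\widehat{LB} = LB^*$.
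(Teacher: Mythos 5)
Your proof is correct, but it follows a different route from the paper. The paper argues directly at the level of strategies and flows: for $\widehat{LB}\le LB^*$ it maps an optimal mixed strategy $(p_1,\ldots,p_q)$ over paths to the flow $x_{ij}=\sum_{P_k:(i,j)\in P_k}p_k$ and verifies, through a chain of algebraic identities, that the objective of $\widehat{\mathcal{P}}_{KPY}$ at this flow equals $\max_y\sum_k p_k Reg(P_k,\neg c^y)=LB^*$; for $LB^*\le\widehat{LB}$ it invokes flow decomposition after asserting (without detailed argument) that an optimal flow without cycles exists, and reverses the same identities. You instead treat $\widehat{\mathcal{P}}_{KPY}$ as the relaxation of $\mathcal{P}_{MJC}$ obtained by enlarging $CH(\mathcal{X})$ to the unit $s$-$t$ flow polytope, using shortest-path LP duality (valid here since $\underline{c}_{ij}\ge 0$ implies $c_{ij}(t)\ge 0$) to identify the potential constraints with the inner minimization over paths, and then prove tightness of the relaxation via flow decomposition $x=\tilde x+z$ together with the explicit estimate $\pi^*(x)-\pi^*(\tilde x)\le\sum_{(i,j)}y^*_{ij}z_{ij}(\overline{c}_{ij}-\underline{c}_{ij})\le\sum_{(i,j)}\overline{c}_{ij}z_{ij}$. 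This buys a rigorous, modular justification of precisely the step the paper labels ``easy to see'' (that cycles can be discarded without increasing the objective), at the price of leaning on the previously established fact that the optimal value of $\mathcal{P}_{RO}'$/$\mathcal{P}_{MJC}$ equals $LB^*$, which the paper cites from Mastin et al.\ and which its own proof re-derives implicitly through the strategy-to-flow correspondence. Two small points worth making explicit in your write-up: the path part $\tilde x$ of the decomposition lies in $CH(\mathcal{X})$ because the divergence constraints force the path weights to sum to one, and the nonnegativity $\underline{c}_{ij}\ge 0$ is used both in the duality step and in the final bound $\sum z_{ij}(\overline{c}_{ij}-\underline{c}_{ij})\le\sum z_{ij}\overline{c}_{ij}$.
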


\begin{proof}
Note that $\widehat{\mathcal{P}}_{KPY}$ turns out to be a network flow problem, where $x_{ij}$ is the flow on edge $(i,j)$.

We first prove $\widehat{LB} \le LB^*$. Consider the mixed strategy of the $x$-player (that plays over paths) which realizes regret $LB^*$. Let us denote by $p_1,\ldots,p_q$ the corresponding probability distribution over paths $P_1,\ldots,P_q$ in $\mathcal{X}$. One can define a flow $x$ on $\mathcal{G}$ by setting $x_{ij} = \sum_{P_k:(i,j) \in P_k} p_k$. This flow is feasible for $\widehat{\mathcal{P}}_{KPY}$. The value of the objective function for this flow corresponds then exactly to $LB^*$, because:
\begin{align*}
\sum_{(i,j)\in \mathcal{E}} \overline{c}_{ij}x_{ij}-\pi_{|\mathcal{V}|} =& \max_{y\in\mathcal{X}} \sum_{(i,j) \in \mathcal{E}} \overline{c}_{ij}x_{ij} - \sum_{(i,j)\in\mathcal{E}}(\underline{c}_{ij} + (\overline{c}_{ij} - \underline{c}_{ij})x_{ij})y_{ij}\\
=&\max_{y\in\mathcal{X}} \sum_{(i,j) \in\mathcal{E}} x_{ij}(\overline{c}_{ij}(1-y_{ij}) + \underline{c}_{ij}y_{ij}) - \sum_{(i,j)\in\mathcal{E}}\underline{c}_{ij}y_{ij}\\
=&\max_{y\in\mathcal{X}} \sum_{(i,j) \in\mathcal{E}}(x_{ij} - y_{ij})\neg c_{ij}^y=\max_{y\in\mathcal{X}} \sum_{k=1}^q p_k Reg(P_k,\neg c^y)
\end{align*}
where $Reg(P_k,\neg c^y)$ is the regret induced by path $P_k$ in scenario $\neg c^y$. 

We now prove $LB^* \le \widehat{LB}$. The \emph{flow decomposition theorem} \cite{ahuja1993network} states that any feasible flow $x$ on graph $\mathcal{G}$ can be decomposed into at most $|\mathcal{E}|$ paths in $\mathcal{X}$ and cycles. In our case, it is easy to see that there always exists an optimal flow $x^*$ for $\widehat{\mathcal{P}}_{KPY}$ that does not include cycles. Such an optimal flow can therefore be decomposed into at most $|\mathcal{E}|$ paths in $\mathcal{X}$, denoted by $P_1,\ldots,P_q$ in the following. Let us consider the mixed strategy that plays each path $P_k$ ($k\in\{1,\ldots,q\}$) with a probability $p_k$ equal to the flow on the path. The regret of this mixed strategy is exactly $\widehat{LB}$, which can be shown by reversing the above sequence of equalities.
\end{proof}

Note that this connection between the game theoretic view and the relaxed versions of Mixed Integer Linear Programming (MILP) formulations of minimax regret problems apply to other robust combinatorial optimization problems. For instance, this connection is also valid for the MILP developed by Yaman et al. \cite{Yaman99minimumspanning} to solve the minmax regret spanning tree problem. Actually, Mastin \emph{et al.} proved that, for any robust combinatorial optimization problem, the value of a minmax regret mixed strategy corresponds to the optimal value of linear program $\mathcal{P}_{RO}'$ given at the end of Section~\ref{sec:game} (that may involve an exponential number of constraints) \cite{mastin2015randomized}. 

\subsubsection{LP formulation by Mastin et al. and a variant}
 
We present below the linear program $\mathcal{P}_{MJC}$ obtained by specifying $\mathcal{P}_{RO}'$ to the case of RSP\footnote{Note that the obtained linear program is very close to the Benders reformulation of $\mathcal{P}_{KPY}$ proposed by Montemanni and Gambardella \cite{DBLP:journals/4or/MontemanniG05}.}. Note that this program returns for each edge $(i,j)$ a probability $x_{ij}$, from which a probability distribution over paths in $\mathcal{X}$ can be inferred in polynomial time.
\begin{empheq}[left=\mathcal{P}_{MJC}\empheqlbrace]{align}
& \min_{\pi,x_{ij}:(i,j)\in\mathcal{E}} \sum_{(i,j)\in\mathcal{E}} \overline{c}_{ij} x_{ij} - \pi \notag \\
\text{subject to}& \notag \\
\pi \leq& \sum_{(i,j)\in\mathcal{E}} \!\!y_{ij}(\underline{c}_{ij} \!+\! x_{ij}(\overline{c}_{ij}\!-\!\underline{c}_{ij}))    \qquad  \forall y\! \in\! \mathcal{X}\label{c:regSP} \\
b_j =&- \sum_{(i,j)\in\mathcal{E}}x_{ij} + \sum_{(j,k)\in\mathcal{E}}x_{jk} \label{c:flow} \\
& x_{ij} \ge 0  \qquad  \forall (i,j) \in \mathcal{E} \notag
\end{empheq}
\normalsize

Constraints~(\ref{c:regSP}) are the specification of constraints~(\ref{eq:cstexp}) in the case of RSP, and therefore insure, at optimum, that $\sum_{(i,j)\in\mathcal{E}} \overline{c}_{ij} x_{ij} - \pi$ is the value of a min max regret mixed strategy. Furthermore, flow constraints~(\ref{c:flow}) make sure that variables $x_{ij}$ are consistent, i.e. they induce a probability distribution over paths in $\mathcal{X}$. It is thus possible to compute $LB^*$ by solving this linear program with a cutting-plane algorithm. 

In the experiments we also tested the linear program $\mathcal{D}_{MJC}$ where one adopts the dual viewpoint (viewpoint of the $c$-player that aims at maximizing the regret)\footnote{Even if it is a dual viewpoint, note that $\mathcal{D}_{MJC}$ is not the dual program of $\mathcal{P}_{MJC}$.}:

\begin{empheq}[left=\mathcal{D}_{MJC}\empheqlbrace]{align}
&\max_{\mu,y_{ij}:(i,j)\in\mathcal{E}} ~\mu - \sum_{(i,j)\in\mathcal{E}} \underline{c}_{ij} y_{ij} \notag \\
\text{subject to}& \notag \\ 
\mu \leq& \sum_{(i,j)\in\mathcal{E}} \!\!x_{ij}(\underline{c}_{ij}y_{ij} \!+\! (1-y_{ij})\overline{c}_{ij})    \qquad  \forall x\! \in\! \mathcal{X}\label{c:regSc} \\
b_j =&- \sum_{(i,j)\in\mathcal{E}}y_{ij} + \sum_{(j,k)\in\mathcal{E}}y_{jk} \notag \\
& y_{ij} \ge 0 \qquad  \forall (i,j) \in \mathcal{E}  \notag
\end{empheq}
Flow variables $y_{ij}$ induce a mixed strategy $P_{\mathcal{X}}$ over paths: the flow can indeed be decomposed into paths by the flow decomposition theorem (each cycle could be removed), and each path $z \in \mathcal{X}$ among them has a probability $P_{\mathcal{X}}(z)$ to be picked equal to the flow on $z$.
Constraints~(\ref{c:regSc}) ensure that $\mu$ is the minimum expected cost for the $x$-player given the mixed strategy of the $c$-player defined by variables $y_{ij}$ because:
\[
Reg(x,P_{\mathcal{X}}) = \sum_{(i,j)\in\mathcal{E}} \!\!\left(x_{ij}(\underline{c}_{ij}y_{ij} \!+\! (1-y_{ij})\overline{c}_{ij})\right) - \sum_{(i,j)\in\mathcal{E}} \underline{c}_{ij} y_{ij}
\]
for all $x \in \mathcal{X}$ and $y \in CH(\mathcal{X})$. Let us now explain this equation. After committing to $P_{\mathcal{X}}$, the $c$-player will draw a path including edge $(i,j)$ with probability $y_{ij}$. If the drawn path contains $(i,j)$ then cost $c_{ij}$ is set to $\underline{c}_{ij}$. Thus, the expected cost of edge $(i,j)$ for the $c$-player is $\underline{c}_{ij}y_{ij}$. If the drawn path does not contain $(i,j)$ then cost $c_{ij}$ is set to $\overline{c}_{ij}$. Consequently, if the $x$-player chooses path $x$, the expected cost of edge $(i,j)$ is $x_{ij}(\underline{c}_{ij}y_{ij} \!+\! (1-y_{ij})\overline{c}_{ij})$. The equation then follows from linearity of expectation w.r.t. edge costs.

This analysis shows that $\mathcal{D}_{MJC}$ computes $\max_{P_\mathcal{X} \in \Delta_\mathcal{X}} \min_{x \in \mathcal{X}} Reg(x,P_\mathcal{X})$ where $P_\mathcal{X}$ induces the mixed strategy $P_\mathcal{U} \in \Delta_\mathcal{U}$ defined by $P_\mathcal{U}(\neg c^y) = P_\mathcal{X}(y)$. We recall that $\mathcal{P}_{MJC}$ computes $\min_{P_{\mathcal{X}} \in \Delta_\mathcal{X}}  \max_{y \in \mathcal{X}} Reg(P_{\mathcal{X}},y)$ where $y$ induces the pure strategy $\neg c^y \in \mathcal{U}$. By the minimax theorem, the optimal values of $\mathcal{P}_{MJC}$ and $\mathcal{D}_{MJC}$ therefore coincide.


Note that there are as many constraints in $\mathcal{P}_{MJC}$ and $\mathcal{D}_{MJC}$ as there are paths in $\mathcal{X}$. We now detail how to overcome this difficulty with a cutting plane method. Given a subset $\mathcal{X}' \subseteq \mathcal{X}$ of paths, let us denote by $\mathcal{P}_{MJC}(\mathcal{X}')$ (resp. $\mathcal{D}_{MJC}(\mathcal{X}')$) the linear program where one considers only constraints induced by paths in $\mathcal{X}'$. Starting from a small subset $\mathcal{X}'$ of paths, one solves $\mathcal{P}_{MJC}(\mathcal{X}')$ (resp. $\mathcal{D}_{MJC}(\mathcal{X}')$). There are two possibilities: either the found optimal solution is feasible for $\mathcal{P}_{MJC}$ (resp. $\mathcal{D}_{MJC}$), in which case it is optimal for $\mathcal{P}_{MJC}$ (resp. $\mathcal{D}_{MJC}$), or it violates at least one constraint in $\mathcal{P}_{MJC}$ (resp. $\mathcal{D}_{MJC}$).
To identify the current situation, we solve the shortest path problem with cost function $c_{ij} = \underline{c}_{ij} +  x_{ij}(\overline{c}_{ij} - \underline{c}_{ij}) $ (resp.  $c_{ij} = \underline{c}_{ij}y_{ij} \!+\! (1-y_{ij})\overline{c}_{ij}$). 
If the cost of that path is strictly smaller than $\pi$ (resp. $\mu$), then it means that at least one constraint is violated, and any constraint associated to a shortest path is a most violated one. In this case, we add to $\mathcal{X}'$ the obtained shortest path and we reoptimize. If the regret induced by a shortest path is greater than or equal to $\pi$ (resp. $\mu$), then it means that no constraint is violated in $\mathcal{P}_{MJC}$ (resp. $\mathcal{D}_{MJC}$), and therefore we have solved $\mathcal{P}_{MJC}$ (resp. $\mathcal{D}_{MJC}$).

\subsubsection{Our approach}

In order to specify Algorithm \ref{alg:doa1} for the robust shortest path problem, one needs to specify the two oracles $BR_x(P_{\mathcal{U}})$ (best response of the $x$-player to a mixed strategy $P_{\mathcal{U}}$ of the $c$-player) and $BR_c(P_{\mathcal{X}})$ (best response of the $c$-player to a mixed strategy $P_{\mathcal{X}}$ of the $x$-player). These two oracles can be implemented by using Dijkstra's algorithm. 
\begin{itemize}
\item For $BR_x(P_{\mathcal{U}})$ the cost of $(i,j)$ is defined as $\sum_{c \in \mathcal{U}} P_{\mathcal{U}}(c) c_{ij}$. Any optimal path found by running Dijkstra's algorithm is then a best response (by Observation~\ref{obs:xOracle}).  
\item For $BR_c(P_{\mathcal{X}})$ the cost of $(i,j)$ is defined as $\underline{c}_{ij} + \sum_{x \in \mathcal{X}}P_\mathcal{X}(x)x_{ij}(\overline{c}_{ij} - \underline{c}_{ij}) $. For any optimal path $z$ found by running Dijkstra's algorithm, $\neg c^{z}$ is then a best response (by Proposition~\ref{propBestCResponse}).  
\end{itemize}

As the robust shortest path problem is defined as a single-source single-target problem, we implemented a bidirectional variant of Dijkstra's algorithm \cite{pohl1969bi} that make it possible to find an optimal path by only exploring a small part of the graph. Consequently, the double oracle algorithm has the advantage over the linear programming formulations that it does not need an explicit representation of graph $\mathcal{G}$.


For this reason, and as it will be shown by the numerical tests, depending on the structure of graph $\mathcal{G}$, running the double oracle algorithm is much faster than solving $\widehat{\mathcal{P}}_{KPY}$ by linear programming. Regarding the comparison with the solution of $\mathcal{P}_{MJC}$ and $\mathcal{D}_{MJC}$, both the double oracle algorithm and the cutting plane algorithms make calls to the oracles defined above. These three methods are betting on a low number of ``relevant'' paths in the problem so that it could converge in few iterations. However, while the double oracle uses both oracles $BR_c(P_{\mathcal{X}})$ and $BR_c(P_{\mathcal{X}})$, the cutting plane algorithm for solving $\mathcal{P}_{MJC}$ (resp. $\mathcal{D}_{MJC}$) only uses $BR_c(P_{\mathcal{X}})$ (resp. $BR_x(P_{\mathcal{U}})$) and an implicit representation of \emph{all} strategies of the $x$-player (resp. $c$-player) through flow constraints on variables $x_{ij}$ (resp. $y_{ij}$), one variable per edge $(i,j)$ in $\mathcal{G}$. The sequence of LPs that must be solved can therefore be costly for large graphs, especially the first LP in the sequence (the other LPs are computed much faster thanks to reoptimization).

\subsection{Branch and bound algorithm for the minmax regret shortest path problem}

In the experiments, we will compare the performances of our branch and bound procedure for determining a minmax regret path with other branch and bounds proposed in the literature.

\subsubsection{Branch and bound by Montemanni et al.}

Montemanni et al.~\cite{DBLP:journals/orl/MontemanniGD04} proposed a branch and bound algorithm for the minmax regret shortest path problem. The branching strategy generates a search tree where each node corresponds to a subset $\mathcal{X}'$ of feasible solutions characterized by a subset $IN(\mathcal{X}')$ of mandatory edges and a subset $OUT(\mathcal{X}')$ of forbidden edges. Note that the following property is guaranteed by the branching strategy: $IN(\mathcal{X}')$ always contains a chain of connected edges which form a subpath starting from $s$. For a detailed presentation of the branching strategy, the reader is referred to the article by Montemanni et al. \cite{DBLP:journals/orl/MontemanniGD04}. The authors use the following lower bound $LB_{MGD}(\mathcal{X}')$ on the max regret of a subset $\mathcal{X}'$ of feasible solutions defined by $IN(\mathcal{X}')$ and $OUT(\mathcal{X}')$:
\small
\begin{equation}
LB_{MGD}(\mathcal{X}') = SP(c^{\mathcal{E}},IN(\mathcal{X}'),OUT(\mathcal{X}')) - SP(c^{\mathcal{E}\backslash OUT(\mathcal{X}')},\emptyset,\emptyset)
\end{equation}
\normalsize

where $SP(c, A, B)$ is the value of the shortest $s-t$ path including all edges from $A$ and excluding all edges from $B$ in scenario $c$, and $c^{A}$ is the scenario where all the edges in $A$ are set to their upper bound and all others are set to their lower bound. The computation of $LB_{MGD}(\mathcal{X}')$ requires two runs of Dijkstra's algorithm: one run for computing $SP(c^{\mathcal{E}},IN(\mathcal{X}'),OUT(\mathcal{X}'))$ and one run for computing $SP(c^{\mathcal{E}\backslash OUT(\mathcal{X}')},\emptyset,\emptyset)$.

\subsubsection{Branch and bound by Chassein and Goerigk}

Chassein and Goerigk \cite{DBLP:journals/eor/ChasseinG15} used the same branching strategy as Montemanni et al., but changed the lower bounding procedure as detailed in section~\ref{sec:secCG}. When focusing on a subset $\mathcal{X}'$ of feasible solutions, we denote by $LB_{CG}(\mathcal{X}')$ the lower bound adapted from $LB_{CG}$. 
In order to define $LB_{CG}(\mathcal{X}')$, the authors consider the set $\hat{\Delta}_{\mathcal{U}}(\mathcal{X})$ of mixed strategies with two equally likely extreme scenarios $c^1$ and $c^2$ such that $c^1_{ij} = c^2_{ij} = \overline{c}_{ij}$ (resp. $\underline{c}_{ij}$) if $(i,j) \in IN(\mathcal{X}')$ (resp. $OUT(\mathcal{X}')$) and $c^1_{ij} + c^2_{ij} = \underline{c}_{ij}+\overline{c}_{ij}$ if $(i,j) \not\in IN(\mathcal{X}')\cup OUT(\mathcal{X}')$. 
The lower bound $LB_{CG}(\mathcal{X}')$ then reads as follows:
\small
\begin{equation}
\max \left\{ \sum_{(i,j) \in IN(\mathcal{X}')}\!\!\!\!\!\! \overline{c}_{ij} - \hat{c}_{ij} + SP(\hat{c},IN(\mathcal{X}'),OUT(\mathcal{X}')) - \mathbb{E}_{P_\mathcal{U}}(val^*(C)) : P_\mathcal{U} \in \hat{\Delta}_{\mathcal{U}}(\mathcal{X})\right\}\notag
\end{equation}
\normalsize
where $\hat{c}_{ij} = (\underline{c}_{ij}+\overline{c}_{ij})/2$, $C$ is a random variable on $\mathcal{U}$, and $\mathbb{E}_{P}(X)$ denotes the expectancy of random variable $X$ given the probability distribution $P$ on the domain of $X$. The maximization operation is performed by resorting to Suurballe's algorithm \cite{journals/networks/SuurballeT84}. Overall, the computation of $LB_{CG}(\mathcal{X}')$ requires three runs of Dijkstra's algorithm: one run for computing $SP(\hat{c},IN(\mathcal{X}')$, $OUT(\mathcal{X}'))$ and two runs involved in Suurballe's algorithm.

\subsubsection{Our approach}

Our branch and bound also uses the branching strategy proposed by Montemanni et al., but relies on the double oracle algorithm for the lower bounding procedure. In the double oracle algorithm, the best responses are computed by resorting to Dijkstra's algorithm both for the $x$-player and the $c$-player:
\begin{itemize}
\item $x$-player: the best response procedure uses Dijkstra's algorithm to find the best $s_{in}-t$ path according to the scenario as defined in Observation~\ref{obs:xOracle}, where $s_{in}$ is the last node on the subpath defined by $IN(\mathcal{X}')$ and where all edges in $OUT(\mathcal{X}')$ have costs set to infinity. The subpath defined by $IN(\mathcal{X}')$ and this path are then concatenated to obtain the best $x$-response. 
\item $c$-player: the best response procedure uses Dijkstra's algorithm to find the best $s-t$ path according to the scenario defined in Proposition~\ref{prop:cOracle}. The best $c$-response is then obtained by setting the costs of edges along this path to their lower bound, and the costs of other edges to their upper bound.
\end{itemize}

As all the best responses already generated are transmitted from a search tree node to its children (see the end of Section \ref{sec:bb}), the computation of the lower bound speeds up with the depth of the considered node in the branch and bound procedure. 


\subsection{Experiments}
We evaluate our approach using two different experiments. In a first experiment, we compare the accuracy and computation time of different lower bounds on the minmax regret. In a second experiment, we test how a branch and bound procedure using the lower bound investigated in this paper compares with the previous results by Chassein and Goerigk \cite{DBLP:journals/eor/ChasseinG15} and Montemanni et al. \cite{DBLP:journals/orl/MontemanniGD04}. 

\textbf{Configuration and implementation details.} All times are wall-clock times on a 2.4GHz Intel Core i5 machine with 8GB main memory. Our implementation is in C++, with external calls to GUROBI version 5.6.3 when (possibly mixed) linear programs need to be solved. Lastly, all single-source single-target Dijkstra algorithms are implemented in a bidirectional fashion and the lower bound designed by Chassein and Goerigk is computed using Suurballe's algorithm \cite{journals/networks/SuurballeT84}. All results are averaged over 100 runs.

\textbf{Description of the instances.} Chassein and Goerigk \cite{DBLP:journals/eor/ChasseinG15} considered two different types of randomly generated graph classes. We consider the same graph
classes and respect the same experimental protocol.

The generation of the costs is parametrized by two parameters, namely $r$ and $d$. For an edge $(i,j)$, a random number $m$ is first sampled from the interval $[1, r]$. The lower bound cost $\underline{c}_{ij}$ is sampled uniformly from $[ ( 1 - d ) m, ( 1 + d ) m]$ and the upper bound cost $\overline{c}_{ij}$ is then chosen uniformly from interval $[\underline{c}_{ij}, ( 1 + d ) m]$. Thus, parameter $d$ enables to control the cost variability (the cost variability increases with $d$).

The graph family $R$-$n$-$r$-$d$-$\delta$ consists of randomly generated graphs such that each graph has $n$ nodes and an approximate edge density of
$\delta$ (i.e. the probability that an edge exists is $\delta$). The starting vertex $s$ is the first generated node of the graph and the destination vertex $t$ is the last generated node of the graph.

The second graph family $K$-$n$-$r$-$d$-$w$ consists of layered graphs. Every layer is completely connected to the next layer. The starting node $s$ is connected to the first layer and the last layer is connected to the destination node $t$. The overall graph consists of $n$ nodes where every layer contains $w$ nodes.

\textbf{First Experiment.} In this experiment we compare the computation time and accuracy of several lower bounds on the minmax regret value:
\begin{itemize}
  \item We will use $LB^*_n$ to denote the lower bound obtained when the double oracle algorithm is stopped after $n$ iterations regardless of the fact that convergence has been attained or not. The returned lower bound is then $Reg(BR_x(P_\mathcal{U}),P_\mathcal{U})$, where $BR_x(P_\mathcal{U})$ is the last best $x$-response generated. 
 \item The lower bound studied in this paper is denoted by $LB^*$. We may also use the notations $LB^*_{DO}$, $LB^*_{\widehat{\mathcal{P}}_{KPY}}$, $LB^*_{\mathcal{P}_{MJC}}$ or $LB^*_{\mathcal{D}_{MJC}}$  according to the method used to compute $LB^*$: 
 \begin{itemize}
 \item Notation $LB^*_{DO}$ means that $LB^*$ is computed with the double oracle algorithm ran until convergence.
 \item Notation $LB^*_{\widehat{\mathcal{P}}_{KPY}}$ means that $LB^*$ is computed using linear program $\widehat{\mathcal{P}}_{KPY}$.
 \item Notation $LB^*_{\mathcal{P}_{MJC}}$ means that $LB^*$ is computed using the cutting plane method to solve $\mathcal{P}_{MJC}$.
 \item Notation $LB^*_{\mathcal{D}_{MJC}}$ means that $LB^*$ is computed using the cutting plane method to solve $\mathcal{D}_{MJC}$.
 \end{itemize}
 
 \item The lower bound designed by Chassein and Goerigk is denoted by $LB_{CG}$. 
 \item We recall that, as shown by Kasperski and Zielinski \cite{KasperskiZ06}, the regret of the midpoint solution is not more than $2 \cdot OPT$. Thus, the regret of the midpoint solution divided by 2 is a lower bound on $OPT$, which we denote by $LB_{KZ}$. 
\end{itemize}

To compare the accuracy of these lower bounds, the following ratios are computed. The closer they are to 1, the more accurate are the lower bounds. 
\begin{itemize}
\item We denote by Gap-medSol the approximation guarantee obtained for the midpoint solution. It corresponds to the regret of the midpoint solution divided by the considered lower bound. 
\item When considering the double oracle algorithm, we will design by minSol the element of minimum regret in the set composed of the midpoint solution and of the admissible solutions of the restricted game (i.e., generated by the double oracle algorithm). Gap-minSol represents the obtained approximation guarantee for the minSol solution. It corresponds to the regret of the minSol solution divided by the considered lower bound. 
\item Lastly, we denote by Gap-Opt the gap between the considered lower bound and the optimal value $OPT$. It corresponds to $OPT$ divided by the considered lower bound. The value $OPT$ is computed using $\mathcal{P}_{KPY}$.
\end{itemize}

The results are presented in Tables 1 to 4 for R-graphs and 5 to 8 for K-graphs\footnote{We highlight the fact that  Gap-medSol for $LB_{KZ}$ is equal to 1 (and not 2) for instances where the regret of the midpoint solution is 0. Consequently the average Gap-medSol, somewhat counter-intuitively, can be strictly less than 2 in some tables.}. Notation ``inf'' stands for an infinite gap (obtained if the lower bound is 0 with $OPT\neq 0$). We can make the following observations.

For R-graphs, we see that the computation of $LB^*_{DO}$ is much faster than the computation of $LB^*_{\widehat{\mathcal{P}}_{KPY}}$ while it is the opposite in K-graphs. Indeed, intuitively, in R-graphs few paths will be interesting (paths with few edges) and the double oracle algorithm will converge after few iterations while in K-graphs, the number of interesting paths may explode with the size of the graph (all paths have the same number of edges).  

For the same reason, the computation of $LB^*_{\mathcal{P}_{MJC}}$ and $LB^*_{\mathcal{D}_{MJC}}$ are more effective than the one of $LB^*_{\widehat{\mathcal{P}}_{KPY}}$ for R-graphs and less effective for K-graphs. The computation times of $LB^*_{\mathcal{P}_{MJC}}$ and $LB^*_{\mathcal{D}_{MJC}}$ are similar on R-graphs but the computation of  $LB^*_{\mathcal{P}_{MJC}}$ is much more efficient on K-graphs than the one of $LB^*_{\mathcal{D}_{MJC}}$.  While the computation of $LB^*_{\mathcal{P}_{MJC}}$ and $LB^*_{\mathcal{D}_{MJC}}$ are more effective than the computation of $LB^*_{DO}$ on K-graphs, they are clearly outperformed on R-graphs. For those graphs the double oracle algorithm is able to focus on a small portion of the graph while the restricted linear programs in the cutting plane method must take into account the whole graph. 

For both types of graphs, $LB^*$ is of course more computationally demanding than $LB_{KZ}$ and $LB_{CG}$ but it is much more accurate and this is often true after only a few iterations of the double oracle algorithm (see $LB^*_{15}$ for instance). 

Finally, note that Gap-minSol can be slightly better than Gap-medSol, but globally, it does not improve much the approximation guarantee. The reason is that the regret of the midpoint solution is often close to $OPT$, as confirmed by the numerical data.\\

\begin{table*}[p]
\footnotesize
\resizebox{\textwidth}{!}{
\begin{tabular}{| l || l | r | r | r || r | r | r | r || r | r | r | r || r | r | r | r |}
  \hline
   & \multicolumn{4}{|c|}{time for lb (ms)} 
   & \multicolumn{4}{|c|}{Gap-medSol}
   & \multicolumn{4}{|c|}{Gap-minSol}
   & \multicolumn{4}{|c|}{Gap-Opt}\\
  \hline
   & mean & std & min & max & mean & std & min & max & mean & std & min & max & mean & std & min & max \\
  \hline
  MILP & 2.95 & 3.08 & 1 &23 &-&-&-&-& -&-&-&-& -&-&-&-\\
  \hline
  $LB^*_5$ & 2.33 & 1.75 & 0 & 10 & 1.15 & 0.22 & 1.00 & 1.88 & 1.14 & 0.22 &1.00 & 1.88 & 1.14 & 0.22 & 1.00 & 1.88 \\
  \hline
  $LB^*_{10}$ & 2.57 & 1.92 & 0 & 9 & 1.15 & 0.22 & 1.00 & 1.88 & 1.14 & 0.22 & 1.00 & 1.88 & 1.14 & 0.22 & 1.00 & 1.88 \\
  \hline  
  $LB^*_{15}$ & 2.60 & 1.89 & 0 & 7 & 1.15 & 0.22 & 1.00 & 1.88 & 1.14 & 0.22 & 1.00 & 1.88 & 1.14 & 0.22 & 1.00 & 1.88\\
  \hline
 $LB^*_{20}$ & 2.43 & 1.83 & 0 & 9 & 1.15 & 0.22 & 1.00 & 1.88 & 1.14 & 0.22 & 1.00 & 1.88 & 1.14 & 0.22 & 1.00 & 1.88\\
  \hline
  $LB^*_{DO}$  & 2.51 & 1.95 & 0 & 9 &1.15 & 0.22 & 1.00 & 1.88 & 1.14 & 0.22 & 1.00 & 1.88 & 1.14 & 0.22 & 1.00 & 1.88\\
\hline
   $LB^*_{\widehat{\mathcal{P}}_{KPY}}$ & 0.61 & 0.49 & 0 & 1 & 1.15 & 0.22 & 1.00 & 1.88 & - & - & - & - & 1.14 & 0.22 & 1.00 & 1.88\\ 
  \hline
  $LB^*_{\mathcal{P}_{MJC}}$ &   0.65 & 0.59 & 0 & 3 & 1.15 & 0.22 & 1.00 & 1.88 &-&-&-&-& 1.14 & 0.22 & 1.00 & 1.88\\
  \hline
   $LB^*_{\mathcal{D}_{MJC}}$& 0.56 & 0.52 & 0 & 2 & 1.15 & 0.22 & 1.00 & 1.88 &-&-&-&-& 1.14 & 0.22 & 1.00 & 1.88\\
   \hline
  $LB_{CG}$ & 0.08 & 0.27 & 0 & 1 & 1.45 & 0.49 & 1.00 & 2.00 & -  & -  & -  & -  & 1.45 & 0.49 & 1.00 & 2.00 \\
    \hline
  $LB_{KZ}$ & 0.06 & 0.24 & 0 & 1 & 1.47 & 0.50 & 1.00 & 2.00 & - & - & - & - & 1.46 & 0.50 & 1.00 & 2.00 \\
 \hline
\end{tabular}}

\caption{R10-1000-0.5-1}
\end{table*}

\begin{table*}
\footnotesize
\resizebox{\textwidth}{!}{
\begin{tabular}{| l || r | r | r | r || r | r | r | r || r | r | r | r || r | r | r | r |}
  \hline
   & \multicolumn{4}{|c|}{time for lb (ms)} 
   & \multicolumn{4}{|c|}{Gap-medSol}
   & \multicolumn{4}{|c|}{Gap-minSol}
   & \multicolumn{4}{|c|}{Gap-Opt}\\
  \hline
   & mean & std & min & max & mean & std & min & max & mean & std & min & max & mean & std & min & max\\
 \hline
  MILP & 96.36 & 21.28 & 63 & 155 & - &- &-&-& -&-&-&-& -&-&-&-\\
  \hline
  $LB^*_5$ & 6.28 & 3.95 & 1 & 17 & 1.25 & 0.51 & 1.00 & 4.61 & 1.24 & 0.51 & 1.00 & 4.61 & 1.24 & 0.50 & 1.00 & 4.61\\
  \hline
  $LB^*_{10}$ & 6.50 & 4.55 & 1 & 24 & 1.18 & 0.24 & 1.00 & 1.99 & 1.18 & 0.24 & 1.00 & 1.99 & 1.18 & 0.24 & 1.00 & 1.99\\
  \hline  
  $LB^*_{15}$ & 6.60 & 4.64 & 1 & 28 & 1.18 & 0.24 & 1.00 & 1.99 & 1.18 & 0.24 & 1.00 & 1.99 & 1.18 & 0.24 & 1.00 & 1.99\\
  \hline
  $LB^*_{20}$ & 6.37 & 4.34 & 1 & 26 & 1.18 & 0.24 & 1.00 & 1.99 & 1.18 & 0.24 & 1.00 & 1.99 & 1.18 & 0.24 & 1.00 & 1.99\\
  \hline
  $LB^*_{DO}$  & 6.14 & 4.10 & 1 & 25 & 1.18 & 0.24 & 1.00 & 1.99 & 1.18 & 0.24 & 1.00 & 1.99 & 1.18 & 0.24 & 1.00 & 1.99\\
   \hline 
  $LB^*_{\widehat{\mathcal{P}}_{KPY}}$ & 17.64 & 2.66 & 13 & 25 & 1.18 & 0.24 & 1.00 & 1.99 & - & - & - & - & 1.18 & 0.24 & 1.00 & 1.99\\
  \hline
  $LB^*_{\mathcal{P}_{MJC}}$ & 10.38 & 4.34 & 4 & 24 & 1.18 & 0.24 & 1.00 & 1.99 & - & - & - & - & 1.18 & 0.24 & 1.00 & 1.99\\
  \hline
  $LB^*_{\mathcal{D}_{MJC}}$ & 8.51 & 3.08 & 4 & 25 & 1.18 & 0.24 & 1.00 & 1.99 & - & - & - & - & 1.18 & 0.24 & 1.00 & 1.99\\
  \hline
  $LB_{CG}$ & 1.51 & 0.62 & 0 & 3 & 1.64 & 0.47 & 1.00 & 2.00 & - & - & - & - & 1.63 & 0.46 & 1.00 & 2.00 \\
  \hline
  $LB_{KZ}$ & 0.89 & 0.49 & 0 & 2 & 1.67 & 0.47 & 1.00 & 2.00 & -&-&-&-& 1.67 & 0.47 & 1.00 & 2.00\\
  \hline
\end{tabular}}
\caption{R100-1000-0.5-0.5}
\end{table*}

\begin{table*}
\footnotesize
\resizebox{\textwidth}{!}{
\begin{tabular}{| l || r | r | r | r || r | r | r | r || r | r | r | r || r | r | r | r |}
  \hline
   & \multicolumn{4}{|c|}{time for lb (ms)} 
   & \multicolumn{4}{|c|}{Gap-medSol}
   & \multicolumn{4}{|c|}{Gap-minSol}
   & \multicolumn{4}{|c|}{Gap-Opt} \\
  \hline
   & mean & std & min & max & mean & std & min & max & mean & std & min & max & mean & std & min & max \\
  \hline
  MILP& 8729.55 & 1113.63 & 5967 & 11226 &-&-&-&-&-&-&-&-&-&-&-&-\\
  \hline
$LB^*_5$ & 47.29 & 28.43 & 3 & 96 & 1.30 & 0.41 & 1.00 & 4.39 & 1.28 & 0.41 & 1.00 & 4.39 & 1.28 & 0.40 & 1.00 & 4.39\\
  \hline
$LB^*_{10}$ & 53.67 & 36.84 & 4 & 127 & 1.23 & 0.24 & 1.00 & 1.96 & 1.21 & 0.23 & 1.00 & 1.96 & 1.21 & 0.23 & 1.00 & 1.96\\
  \hline  
$LB^*_{15}$& 53.25 & 36.74 & 3 & 127 & 1.23 & 0.24 & 1.00 & 1.96 & 1.21 & 0.23 & 1.00 & 1.96 & 1.21 & 0.23 & 1.00 & 1.96\\
  \hline
$LB^*_{20}$ & 52.72  & 36.01 & 3 & 128 &  1.23 & 0.24 & 1.00 & 1.96 & 1.21 & 0.23 & 1.00 & 1.96 & 1.21 & 0.23 & 1.00 & 1.96\\
  \hline
$LB^*_{DO}$  & 54.98 & 37.47 & 4 & 127 &1.23 & 0.21 & 1.00 & 1.96 & 1.21 & 0.23 & 1.00 & 1.96 & 1.21 & 0.23 & 1.00 & 1.96\\
  \hline 
$LB^*_{\widehat{\mathcal{P}}_{KPY}}$ & 1734.40 & 218.08 &  847 & 2166 & 1.23 & 0.21 & 1.00 & 1.96 & - & - & - & - & 1.21 & 0.23 & 1.00 & 1.96\\
  \hline
$LB^*_{\mathcal{P}_{MJC}}$ & 501.34 & 72.44 & 378 & 650 & 1.23 & 0.21 & 1.00 & 1.96 & - & - & - & - & 1.21 & 0.23 & 1.00 & 1.96\\
  \hline
$LB^*_{\mathcal{D}_{MJC}}$ & 542.17 & 196.21 &  283 & 1410 & 1.23 & 0.21 & 1.00 & 1.96 & - & - & - & - & 1.21 & 0.23 & 1.00 & 1.96\\
  \hline
  $LB_{CG}$& 22.32 & 2.97 & 16 & 31 & 1.73 & 0.44 & 1.00 & 2& - & - & - & - & 1.71 & 0.43 & 1.00  & 2.00 \\
  \hline
  $LB_{KZ}$ & 8.10 & 2.23 & 2 & 16 & 1.74 & 0.44 & 1.00 & 2.00 & -&-&-&-& 1.72 & 0.44 & 1.00 & 2.00 \\
\hline
\end{tabular}}
\caption{R500-1000-0.5-0.5}
\end{table*}

\begin{table*}
\footnotesize
\resizebox{\textwidth}{!}{
\begin{tabular}{| l || r | r | r | r || r | r | r | r || r | r | r | r || r | r | r | r |}
  \hline
   & \multicolumn{4}{|c|}{time for lb (ms)} 
   & \multicolumn{4}{|c|}{Gap-medSol}
   & \multicolumn{4}{|c|}{Gap-minSol}
   & \multicolumn{4}{|c|}{Gap-Opt} \\
  \hline
   & mean & std & min & max & mean & std & min & max & mean & std & min & max & mean & std & min & max \\
  \hline
  MILP & 59920.20 & 7593.73 & 43871 & 83568&-&-&-&-&-&-&-&-&-&-&-&-\\
  \hline
  $LB^*_5$ &131.93 & 73.71 & 8 & 285 & 1.31 & 0.37 & 1.00 & 3.47 & 1.29 & 0.33 & 1.00 & 3.19 & 1.29 & 0.33 & 1.00 & 3.19 \\
  \hline
  $LB^*_{10}$ & 145.32 & 97.80 & 7 & 454 &1.28 & 0.28 & 1.00 & 1.97 & 1.26 & 0.27 & 1.00 & 1.97 & 1.26 & 0.26 & 1.00 & 1.97 \\
  \hline  
  $LB^*_{15}$ & 145.44 & 97.22 & 8 & 460 & 1.28 & 0.28 & 1.00 & 1.97 & 1.26 & 0.27 & 1.00 & 1.97 & 1.26 & 0.26 & 1.00 & 1.97\\
  \hline
  $LB^*_{20}$ & 146.04 & 98.59 & 8 & 455 & 1.28 & 0.28 & 1.00 & 1.97 & 1.26 & 0.27 & 1.00 & 1.97 & 1.26 & 0.26 & 1.00 & 1.97 \\
  \hline
  $LB^*_{DO}$  & 146.47 & 98.45 & 8 & 458 &  1.28 & 0.28 & 1.00 & 1.97 & 1.26 & 0.27 & 1.00 & 1.97 & 1.26 & 0.26 & 1.00 & 1.97 \\
  \hline 
  $LB^*_{\widehat{\mathcal{P}}_{KPY}}$ &6883.86 & 860.32 & 3877 & 9275 & 1.28 & 0.28 & 1.00 & 1.97 & 1.26 & 0.27 & 1.00 & 1.97 & 1.26 & 0.26 & 1.00 & 1.97\\
  \hline
 $LB^*_{\mathcal{P}_{MJC}}$ & 2328.92 & 300.24 & 1757 & 3195 & 1.28 & 0.28 & 1.00 & 1.97 & 1.26 & 0.27 & 1.00 & 1.97 & 1.26 & 0.26 & 1.00 & 1.97\\
  \hline
  $LB^*_{\mathcal{D}_{MJC}}$ &3128.35 & 1231.49 & 1458 & 7638 & 1.28 & 0.28 & 1.00 & 1.97 & 1.26 & 0.27 & 1.00 & 1.97 & 1.26 & 0.26 & 1.00 & 1.97\\
  \hline
  $LB_{CG}$ & 87.46 & 10.67 & 53 & 117 & 1.80 & 0.38 & 1.00 & 2.00 & - & - & - & - & 1.77 & 0.38 & 1.00 & 2.00\\
  \hline
  $LB_{KZ}$ & 24.06 & 6.95 & 5 & 38 & 1.83 & 0.38 & 1.00 & 2.00& -&-&-&-& 1.80 & 0.38 & 1.00 & 2.00\\
  \hline 
\end{tabular}}
\caption{R1000-1000-0.5-0.5}
\end{table*}



\begin{table*}
\footnotesize
\resizebox{\textwidth}{!}{
\begin{tabular}{| l || r | r | r | r || r | r | r | r || r | r | r | r || r | r | r | r |}
  \hline
   & \multicolumn{4}{|c|}{time for lb (ms)} 
   & \multicolumn{4}{|c|}{Gap-medSol}
   & \multicolumn{4}{|c|}{Gap-minSol}
   & \multicolumn{4}{|c|}{Gap-Opt}\\
  \hline
   & mean & std & min & max & mean & std & min & max & mean & std & min & max & mean & std & min & max \\
   \hline
   MILP & 60.13 & 24.59 &19 &139 &-&-&-&-&-&-&-&-&-&-&-&-\\
  \hline
  $LB^*_5$ & 7.01 & 1.12 & 6 & 11 & inf & inf & 1.52 & inf & inf & inf & 1.52 & inf & inf & nan & 1.52 & inf \\
  \hline
  $LB^*_{10}$ & 14.21 & 1.31 & 13 & 21 & 1.79 & 0.24& 1.37 & 3.18 & 1.78 & 0.23 & 1.37 & 3.18 & 1.73 & 0.22 & 1.37 & 3.09	\\
  \hline  
  $LB^*_{15}$ & 24.28 & 2.05& 23 & 32 & 1.56 & 0.11 & 1.29 & 1.91 & 1.55 & 0.11 & 1.29 & 1.91 & 1.51 & 0.10& 1.29 & 1.87\\
  \hline
  $LB^*_{20}$ & 35.93 & 1.90 & 34 & 46 & 1.48 & 0.10 & 1.25 & 1.73 & 1.47 & 0.10 & 1.25 & 1.73 & 1.44 & 0.08 & 1.25 & 1.71\\
  \hline
 $LB^*_{DO}$  & 110.06 & 38.38 & 38 & 218 & 1.42 & 0.08 & 1.19 & 1.68 & 1.40 & 0.08 & 1.19 & 1.68 & 1.37 & 0.07 & 1.19 & 1.64\\
  \hline
  $LB^*_{\widehat{\mathcal{P}}_{KPY}}$ & 4.64 & 0.67 & 4 & 8 & 1.42 & 0.08 & 1.19 & 1.68 & -&-&-&-& 1.37 & 0.07 & 1.19 & 1.64\\ 
  \hline 
  $LB^*_{\mathcal{P}_{MJC}}$ & 32.20 & 12.60 & 11 & 70 & 1.42 & 0.08 & 1.19 & 1.68 & -&-&-&-& 1.37 & 0.07 & 1.19 & 1.64\\ 
  \hline 
  $LB^*_{\mathcal{D}_{MJC}}$ & 33.32 & 11.23 & 13 & 73 & 1.42 & 0.08 & 1.19 & 1.68 & -&-&-&-& 1.37 & 0.07 & 1.19 & 1.64\\ 
  \hline 
  $LB_{CG}$ & 0.50 & 0.50 & 0 & 1 & 1.91 & 0.09 & 1.66 & 2.00 & - & - & - & - & 1.85 & 0.10 & 1.63 & 2.00 \\
  \hline
 $LB_{KZ}$ & 0.20 & 0.40 & 0 & 1 & 2.00 & 0.00 & 2.00 & 2.00 & -&-&-&-& 1.94 & 0.07 & 1.72 & 2.00 \\
\hline
\end{tabular}
}
\caption{K102-1000-1.-2}
\end{table*}

\begin{table*}
\footnotesize
\resizebox{\textwidth}{!}{
\begin{tabular}{| l || r | r | r | r || r | r | r | r || r | r | r | r || r | r | r | r |}
  \hline
   & \multicolumn{4}{|c|}{time for lb (ms)} 
   & \multicolumn{4}{|c|}{Gap-medSol}
   & \multicolumn{4}{|c|}{Gap-minSol}
   & \multicolumn{4}{|c|}{Gap-Opt}\\
  \hline
   & mean & std & min & max & mean & std & min & max & mean & std & min & max & mean & std & min & max  \\
  \hline
  MILP & 592.29 & 250.74 & 335 & 2305 & - &-&-&-&-&-&-&-&-&-&-&-\\
  \hline
  $LB^*_{5}$ & 21.51 & 2.87 & 19 & 34 & inf & inf  & inf & inf   & inf  & inf  & inf & inf & inf  & inf  & inf & inf \\
  \hline
  $LB^*_{10}$ & 48.23 & 3.86 & 44 & 70 & 1.50 & 0.34 & 1.05& 2.55 & 1.48 & 0.33 & 1.03 & 2.55 & 1.43 & 0.32 & 1.03 & 2.47 \\
  \hline  
  $LB^*_{15}$ & 79.97 & 3.75 & 74 & 109 & 1.28 & 0.16 & 1.04 & 1.88 & 1.25 & 0.15 & 1.02 & 1.77 & 1.22 & 0.15 & 1.02 & 1.74\\
  \hline
 $LB^*_{20}$  & 117.36 & 5.05 &107 &147 & 1.20 & 0.13 & 1.03 & 1.81 & 1.16 & 0.12 & 1.01 & 1.81 & 1.14& 0.11 & 1.01& 1.80\\
  \hline
  $LB^*_{DO}$ & 298.43 & 125.86 & 107 & 847 & 1.11&0.05&1.02& 1.23 & 1.05&0.03&1.01&1.15 & 1.05& 0.02 & 1.01 & 1.15 \\
  \hline
  $LB^*_{\widehat{\mathcal{P}}_{KPY}}$ & 117.69 & 10.20 & 91 & 145 & 1.11 & 0.05 & 1.02 & 1.23 &-&-&-&-& 1.05 & 0.02 & 1.01 & 1.15  \\
  \hline
    $LB^*_{\mathcal{P}_{MJC}}$  & 137.13 & 42.56 & 58 & 367 & 1.11 & 0.05 & 1.02 & 1.23 &-&-&-&-& 1.05 & 0.02 & 1.01 & 1.15  \\
  \hline
   $LB^*_{\mathcal{D}_{MJC}}$  & 260.66 & 72.08 & 109 & 507 & 1.11 & 0.05 & 1.02 & 1.23 &-&-&-&-& 1.05 & 0.02 & 1.01 & 1.15  \\
  \hline
  $LB_{CG}$ & 3.62 & 0.52 & 3 & 5 & 1.64 & 0.10 & 1.42 & 1.91 & - & - & - & - & 1.56 & 0.08 & 1.40 & 1.77\\
  \hline
  $LB_{KZ}$ & 2.30 & 0.48 & 2 & 4 & 2.00 & 0.00 & 2.00 & 2.00 & -&-&-&-& 1.90 & 0.05 & 1.68 & 2.00 \\
  \hline 
\end{tabular}
}
\caption{K402-1000-1.-10}
\end{table*}

\begin{table*}
\footnotesize
\resizebox{\textwidth}{!}{
\begin{tabular}{| l || r | r | r | r || r | r | r | r || r | r | r | r || r | r | r | r |}
  \hline
   & \multicolumn{4}{|c|}{time for lb (ms)} 
   & \multicolumn{4}{|c|}{Gap-medSol}
   & \multicolumn{4}{|c|}{Gap-minSol}
   & \multicolumn{4}{|c|}{Gap-Opt} \\
  \hline
   & mean & std & min & max & mean & std & min & max & mean & std & min & max & mean & std & min & max \\
  \hline
  MILP & 4344.55 & 2043.85 & 2136 & 12849 & -&-&-&- &-&-&-&- &-&-&-&-\\
  \hline
  $LB^*_{5}$& 55.68 & 2.68 & 52 & 71 & inf & inf  & inf & inf   & inf  & inf  & inf & inf & inf  & inf  & inf & inf \\
  \hline
  $LB^*_{10}$ & 124.02 & 4.82 & 116 & 148 &  1.64 & 0.50 & 1.15 & 5.51  & 1.63 & 0.50 & 1.15 & 5.51  & 1.56 & 0.48 & 1.15 & 5.31\\
  \hline  
  $LB^*_{15}$ & 201.77 & 6.63 & 189 & 228 & 1.39 & 0.20 & 1.11 & 2.36 & 1.38 & 0.20 & 1.09 & 2.36 & 1.32 & 0.18 & 1.09 & 2.19 \\
  \hline
  $LB^*_{20}$ &291.10 & 9.70 & 276 & 319 & 1.30 &0.15 & 1.11 & 2.29 &1.29 & 0.15 & 1.11 & 2.29 & 1.23 & 0.14 & 1.08 & 2.19\\
  \hline
 $LB^*_{DO}$ & 4700.47 & 1741.47 & 1924 & 11682 &1.10 & 0.03 & 1.04 & 1.21 &1.05 & 0.02& 1.02 & 1.12 & 1.04 &  0.01 & 1.02 & 1.09\\
   \hline 
  $LB^*_{\widehat{\mathcal{P}}_{KPY}}$ & 1032.94 & 82.58 & 833 & 1321 & 1.10 & 0.03 & 1.04 & 1.21 & - & - & - & - & 1.04 & 0.01 & 1.02 & 1.09\\
  \hline
  $LB^*_{\mathcal{P}_{MJC}}$ & 1096.37 & 241.22 & 710 & 1770 & 1.10 & 0.03 & 1.04 & 1.21 & - & - & - & - & 1.04 & 0.01 & 1.02 & 1.09\\
  \hline
  $LB^*_{\mathcal{D}_{MJC}}$ & 3985.08 & 680.85 & 2283 & 5834 & 1.10 & 0.03 & 1.04 & 1.21 & - & - & - & - & 1.04 & 0.01 & 1.02 & 1.09\\
  \hline
  $LB_{CG}$ & 9.71 & 0.81 & 8 & 15 & 1.63 & 0.06 & 1.50 & 1.81 & - & - & - & - & 1.55 & 0.05 & 1.43 & 1.68 \\
  \hline
  $LB_{KZ}$ & 5.97 & 0.71 & 5 & 10	& 2.00 & 0.00 & 2.00 & 2.00 & -&-&-&-& 1.90 & 0.05 & 1.75 & 1.99 \\
 \hline 
\end{tabular}}
\caption{K1002-1000-1.-10}
\end{table*}

\begin{table*}
\footnotesize
\resizebox{\textwidth}{!}{
\begin{tabular}{| l || r | r | r | r || r | r | r | r || r | r | r | r || r | r | r | r |}
  \hline
   & \multicolumn{4}{|c|}{time for lb (ms)} 
   & \multicolumn{4}{|c|}{Gap-medSol}
   & \multicolumn{4}{|c|}{Gap-minSol}
   & \multicolumn{4}{|c|}{Gap-Opt} \\
  \hline
   & mean & std & min & max & mean & std & min & max & mean & std & min & max & mean & std & min & max \\
  \hline
  MILP & 22826.90 & 9761.98 & 7450 & 57675 &-&-&-&-&-&-&-&-&-&-&-&-\\
  \hline
  $LB^*_{5}$ &118.77 & 4.69 & 110 & 140 & inf & inf  & inf & inf   & inf  & inf  & inf & inf & inf  & inf  & inf & inf \\
  \hline
  $LB^*_{10}$ &256.66 & 9.72 & 242 & 294 & 1.72 & 0.62 & 1.20 & 5.06  & 1.71 & 0.62 & 1.20 & 5.06  & 1.63 & 0.59 &  1.16 & 4.81\\
  \hline  
  $LB^*_{15}$ & 413.50 & 11.51 & 393 & 449 & 1.42 & 0.20 & 1.15 & 2.33 & 1.42 & 0.20 &1.15 & 2.33 & 1.34 & 0.18 & 1.12 & 2.23\\
  \hline
  $LB^*_{20}$ & 588.99 & 14.33 & 558 & 629 & 1.34 & 0.15 & 1.15 & 2.01 & 1.34 & 0.15 & 1.15 & 2.01 &  1.27 & 0.14 & 1.10 & 1.89\\
  \hline
 $LB^*_{DO}$  & 52684.10 & 23175.80 & 21865 & 180398 & 1.10 & 0.02 & 1.06 & 1.16 & 1.05 & 0.01 & 1.02 & 1.09 & 1.04 & 0.01 & 1.02 & 1.06\\
  \hline 
  $LB^*_{\widehat{\mathcal{P}}_{KPY}}$ & 2569.08 & 359.57 & 1966 & 3913 & 1.10 & 0.02 & 1.06 & 1.16 & - & - & - & - & 1.04 & 0.01 & 1.02 & 1.06 \\
  \hline
   $LB^*_{\mathcal{P}_{MJC}}$ & 6159.52 & 1311.35 & 3953 & 9760 & 1.10 & 0.02 & 1.06 & 1.16 & - & - & - & - & 1.04 & 0.01 & 1.02 & 1.06 \\
  \hline
   $LB^*_{\mathcal{D}_{MJC}}$ & 34758.00 & 5787.99 & 21827 & 53184 & 1.10 & 0.02 & 1.06 & 1.16 & - & - & - & - & 1.04 & 0.01 & 1.02 & 1.06 \\
  \hline
  $LB_{CG}$ & 20.39 & 0.95 & 19 & 26 & 1.62 & 0.04 & 1.54 & 1.73 & - & - & - & - & 1.54 & 0.04 & 1.45 & 1.67 \\
  \hline
  $LB_{KZ}$ & 12.23 & 0.65 & 11 & 15 & 2.00 & 0.00 & 2.00 & 2.00 & -&-&-&-& 1.90 & 0.03 & 1.81 & 1.97 \\
  \hline
\end{tabular}}
\caption{K2002-1000-1.-10}
\end{table*}

We have performed the same experiments on two valued graphs representing real cities (the edge values represent distances):
\begin{itemize}
\item a graph representing New York city with 264346 nodes and 733846 edges,
\item a graph representing the San Fransisco Bay with 321270 nodes and 800172 edges.
\end{itemize}

Those graphs are available on the website of the $9^{th}$ DIMACS implementation challenge on the computation of shortest paths\footnote{\url{http://www.dis.uniroma1.it/challenge9/download.shtml}}. For these two graphs, for each instance, the starting node and the destination node are randomly drawn according to a uniform distribution. For each cost $c_{ij}$, the lower  bound $\underline{c}_{ij}$ (resp. upper bound $\overline{c}_{ij}$) is randomly drawn according to a uniform distribution in interval $[c_{ij} - c_{ij}/10, c_{ij} ]$ (resp. $[c_{ij} , c_{ij} + c_{ij}/10]$). The results are presented in Tables 9 and 10. For those two graphs, the computation of $LB^*_{DO}$ is much faster than: the computation of $LB^*_{\widehat{\mathcal{P}}_{KPY}}$ (twice faster for New York and 9 times faster for San Francisco Bay), the computation of $LB^*_{ \mathcal{D}_{MJC}}$ (5 times faster for New York and $2.5$ times faster for San Francisco Bay), and the computation of $LB^*_{ \mathcal{P}_{MJC}}$ (5 times faster for New York and $3.2$ times faster for San Francisco Bay). 

Regarding the accuracy of the considered lower bounds, one can observe that the lower bound obtained after $20$ iterations of the double oracle algorithm is close to $LB^*$ and provides a much better approximation guarantee than $LB_{KZ}$ or $LB_{CG}$. \\

\begin{table*}
\footnotesize
\resizebox{\textwidth}{!}{
\begin{tabular}{| l || r | r | r | r || r | r | r | r || r | r | r | r |}
  \hline
   & \multicolumn{4}{|c|}{time for lb (ms)} 
   & \multicolumn{4}{|c|}{Gap-medSol}
   & \multicolumn{4}{|c|}{Gap-minSol}\\
  \hline
   & mean & std & min & max & mean & std & min & max & mean & std & min & max \\
  \hline
  $LB^*_{5}$ & 1529.98 & 1077.65 & 47 & 3643 & 3.03 & 8.33 & 1.12 & 82.80  & 3.03 & 8.33 & 1.09 & 82.80\\
  \hline
  $LB^*_{10}$ & 3393.42 & 2359.67 & 103 & 8095 & 1.40 & 0.20 & 1.06 & 2.27 &  1.39 & 0.20 & 1.06 & 2.25\\
  \hline  
  $LB^*_{15}$ & 5752.06 & 3963.95 & 149 & 13300 & 1.30 & 0.14 & 1.05 & 1.77 & 1.29 & 0.15 & 1.05 & 1.77\\
  \hline
  $LB^*_{20}$ & 8083.67 & 5669.42 & 141 & 19378 & 1.27 & 0.12 & 1.05 & 1.66 & 1.26 & 0.12 & 1.05 & 1.61\\
  \hline
  $LB^*_{DO}$  & 48716.95 & 64774.72 & 139 & 333969 & 1.23 & 0.12 & 1.05 & 1.62 & 1.22 & 0.12 & 1.04 & 1.58\\
  \hline 
  $LB^*_{\widehat{\mathcal{P}}_{KPY}}$ & 107127.28 & 37887.70 & 12683 & 230967 & 1.23 & 0.12 & 1.05 & 1.62 & - & - & - & - \\
  \hline
  $LB^*_{\mathcal{P}_{MJC}}$ & 262040.03 & 278177.71 & 6284 & 1398590 & 1.23 & 0.12 & 1.05 & 1.62 & - & - & - & - \\
  \hline
  $LB^*_{\mathcal{D}_{MJC}}$ & 253150.12 & 249415.21 & 17698 & 1315000 & 1.23 & 0.12 & 1.05 & 1.62 & - & - & - & - \\
  \hline
  $LB_{CG}$ & 658.59 & 119.41 & 494 & 987 & 1.96 & 0.05 & 1.73 & 2.00 & - & - & - & -  \\
  \hline
  $LB_{KZ}$ & 248.71 & 176.25 & 7 & 598 & 2.00 & 0.00 & 2.00 & 2.00 & -&-&-&-\\
 \hline 
\end{tabular}}
\caption{New York City}
\end{table*}

\begin{table*}
\footnotesize
\resizebox{\textwidth}{!}{
\begin{tabular}{| r || r | r | r | r || r | r| r | r || r | r | r | r|}
  \hline
   & \multicolumn{4}{|c|}{time for lb (ms)} 
   & \multicolumn{4}{|c|}{Gap-medSol}
   & \multicolumn{4}{|c|}{Gap-minSol}\\
   \hline
   & mean & std & min & max & mean & std & min & max & mean & std & min & max \\
  \hline
  $LB^*_{5}$ & 2243.13 & 1482.35 & 26 & 5773 & 1.70 & 0.43 & 1.00 & 3.37 & 1.69 & 0.43 & 1.00 & 3.34\\
  \hline
  $LB^*_{10}$ & 4717.81 & 3290.32 & 26 & 12826 & 1.51 & 0.25 & 1.00 & 2.07 & 1.49 & 0.25 & 1.00 & 2.07\\
  \hline  
  $LB^*_{15}$ & 6693.88 & 5134.00 & 24 & 19480 & 1.47 & 0.23 & 1.00 & 1.89 & 1.46 & 0.23 & 1.00 & 1.89\\
  \hline
  $LB^*_{20}$ & 8278.93 & 7044.32 & 26 & 27046 & 1.47 & 0.23 & 1.00 & 1.89 & 1.45 & 0.23 &  1.00 & 1.89\\
  \hline
  $LB^*_{DO}$  & 11580.90 & 14054.12 & 29 & 75825 & 1.46 &  0.23 & 1.00 & 1.89 & 1.45 & 0.23 & 1.00 & 1.89\\
  \hline
  $LB^*_{\widehat{\mathcal{P}}_{KPY}}$ & 90649.60 & 34228.70 & 12531 & 250604 & 1.46 &  0.23 & 1.00 & 1.89 & - & - & - & -\\
  \hline
   $LB^*_{\mathcal{P}_{MJC}}$ & 37484.20 & 27275.40 & 6586 & 182457 & 1.46 &  0.23 & 1.00 & 1.89 & - & - & - & -\\
  \hline
   $LB^*_{\mathcal{D}_{MJC}}$ & 29387.00 & 14808.80 & 11616 & 95791 & 1.46 &  0.23 & 1.00 & 1.89 & - & - & - & -\\
  \hline
  $LB_{CG}$ & 817.32 & 143.97 & 608 & 1397 & 1.94 & 0.15 & 1.00 & 2.00 & - & - & - & -\\
  \hline
  $LB_{KZ}$ & 357.48 & 228.77 & 6 & 896 & 1.98 & 0.14 & 1.00 & 2.00 & -&-&-&-\\
  \hline 
 \end{tabular}}
\caption{San Francisco Bay}
\end{table*}

\textbf{Second Experiment.} In a second experiment, we compare the performances of four exact solution approaches: three branch and bound algorithms and the approach consisting in solving $\mathcal{P}_{KPY}$ (denoted MILP in the following). The three branch and bound algorithms use a best-first exploration strategy of the search tree generated by the branching procedure designed by Montemanni et al. \cite{DBLP:journals/orl/MontemanniGD04}. They only differ by the lower bound used. Algorithm $BB_{MGD}$ uses $LB_{MGD}$, algorithm $BB_{CG}$ uses $LB_{CG}$ and algorithm $BB^*$ uses $LB^*_{DO}$.

For small instances of each graph class, we compare the time needed for each algorithm to solve the problem to optimality and the number of nodes expanded in the branch and bound tree. The solution times are measured in milliseconds. The results, presented in Figures \ref{fig:Rgraphs} and \ref{fig:Kgraphs}, are averaged over 100 randomly generated graphs. For the computation of $LB^*_{DO}$ the number of paths in the restricted game is upper bounded by 50. Put another way, if 50 paths are generated in the restricted game, then the lower bounding procedure is stopped and the current lower bound is returned (note that the number of iterations is not upper bounded by 50, as the number of generated scenarios is not). 
Note that the curves are represented in a logarithmic scale on the $y$-axis for the graphs giving the evolution of the solution time (left side of Figures \ref{fig:Rgraphs} and \ref{fig:Kgraphs}). 

For R-graphs we see that all three branch and bound algorithms outperform the algorithm solving the MILP, except for $BB_{MGD}$ that performs very badly if the cost variability is too large. This can be seen in Figure 2.d) where the number of expanded nodes explodes for high cost variabilities. Except for very low cost variabilities, $BB^*$ performs best on average thanks to the very low number of expanded nodes.
 
For K-graphs we see that all three branch and bound algorithms are outperformed by the algorithm solving the MILP, except when the layer size is large. Indeed, as already noted by Chassein and Goerigk, the number of edges in the graph increases with the layer size, which impacts badly on the computation times for the MILP. Regarding the way the branch and bounds compare themselves, usually $BB_{MGD}$ performs least except when the layer size is large, and $BB_{CG}$ performs slightly better than $BB^*$. As seen in the previous experiment, $LB^*_{DO}$ is indeed not favored by K-graphs due to the possible large number of interesting paths between $s$ and $t$.
  
\begin{figure*}[htp]
  \centering
  \resizebox{\textwidth}{!}{
  \stackunder[5pt]{ \label{fig:Ra} \includegraphics[]{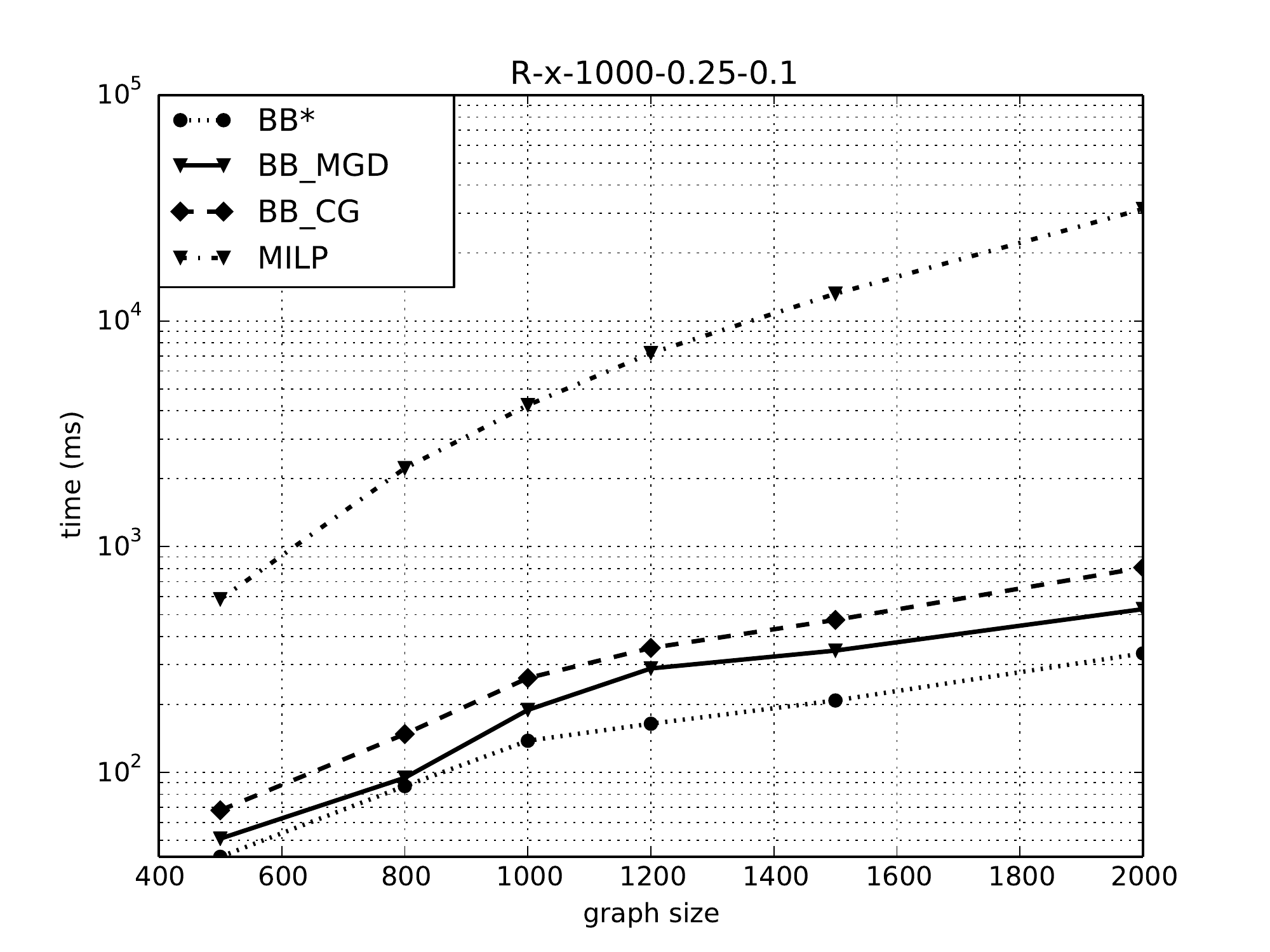}}{\Huge a) R-Graphs with different numbers of nodes}
   \hspace{5pt}
\stackunder[5pt]{ \label{fig:Rb} \includegraphics[]{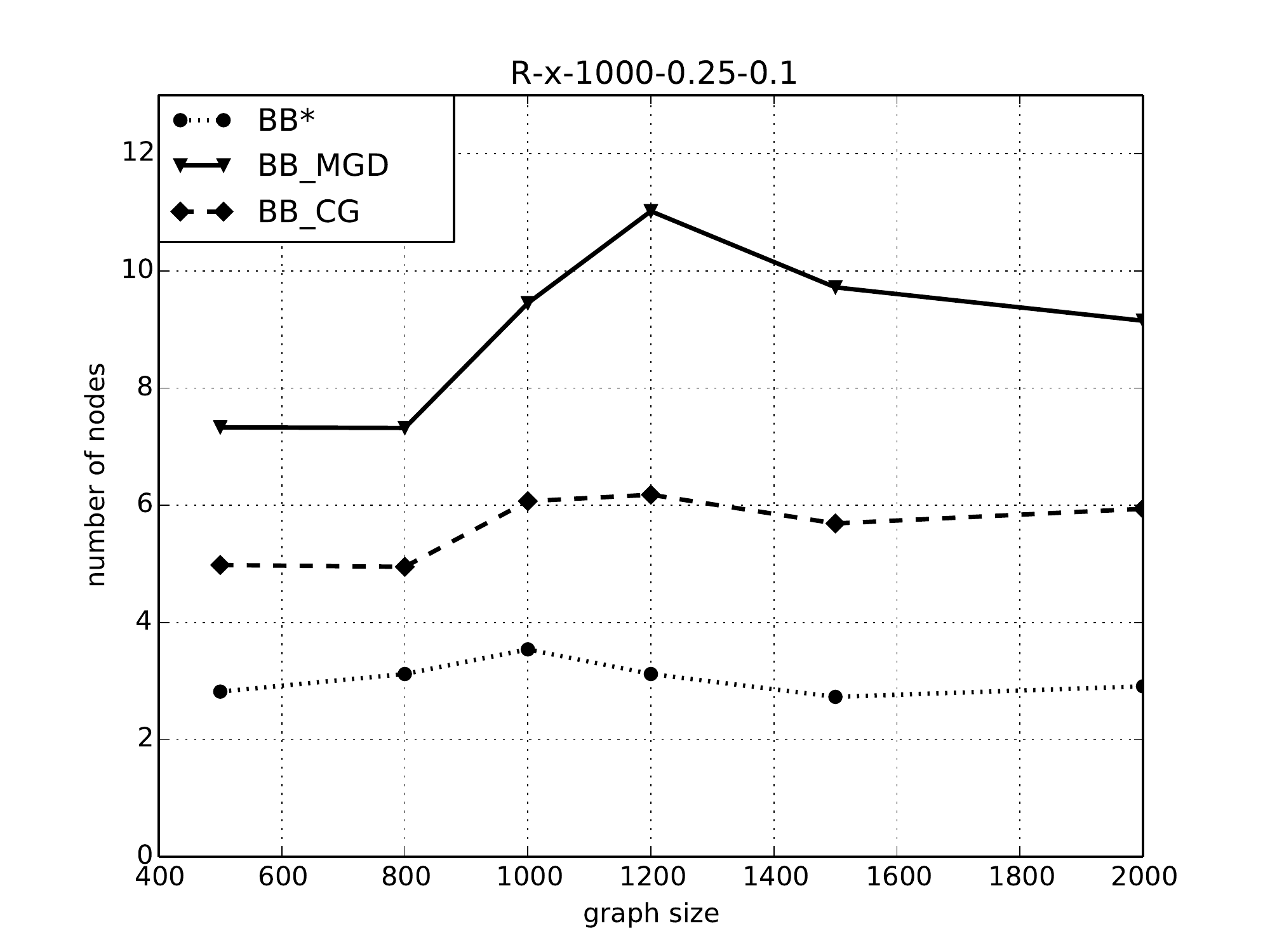}}{\Huge b) R-Graphs with different numbers of nodes}

  }
  \resizebox{\textwidth}{!}{
  \stackunder[5pt]{ \label{fig:Rc} \includegraphics[]{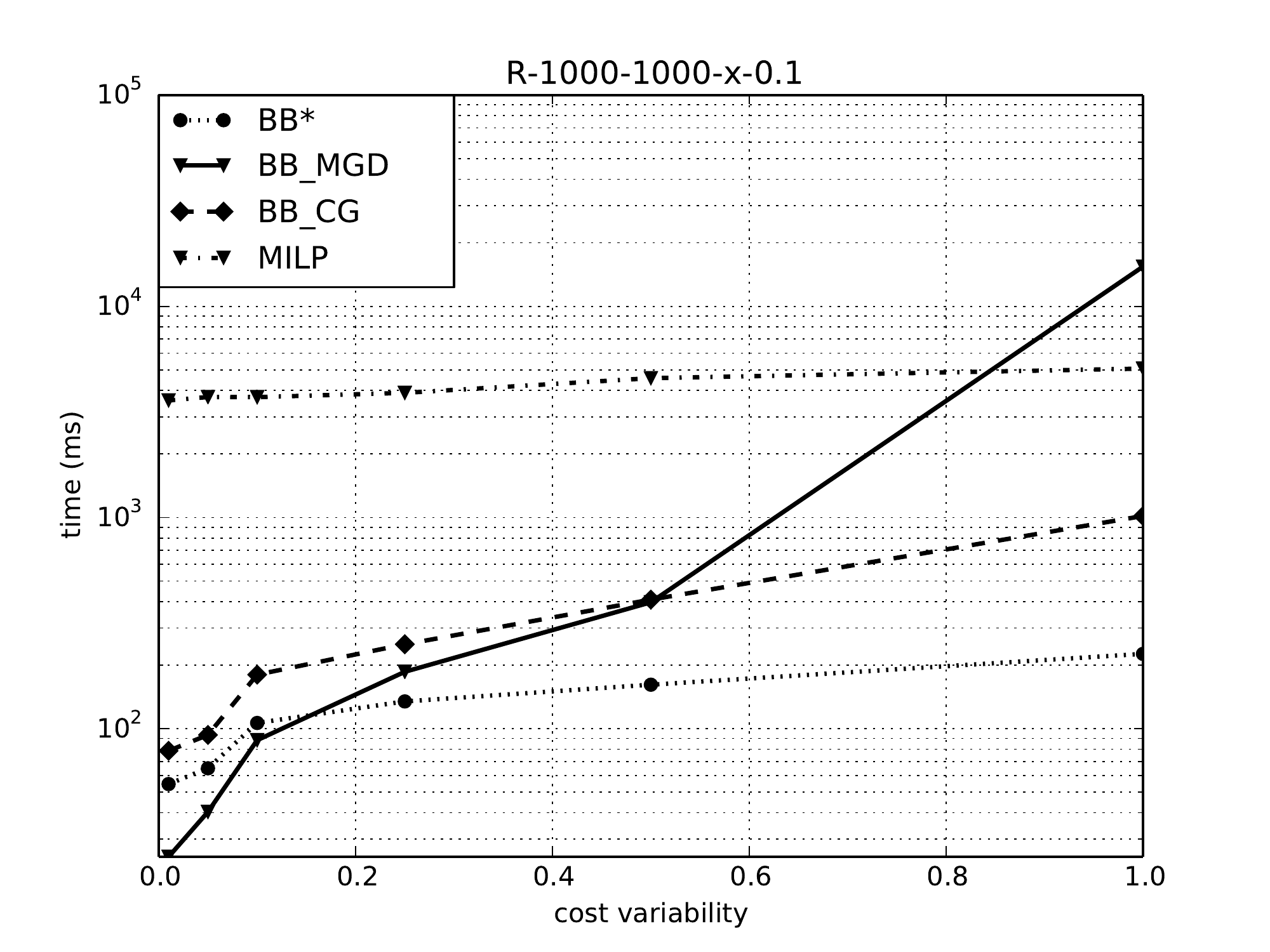}}{\Huge c) R-Graphs with different cost variabilities}
  \hspace{5pt}
  \stackunder[5pt]{ \label{fig:Rd} \includegraphics[]{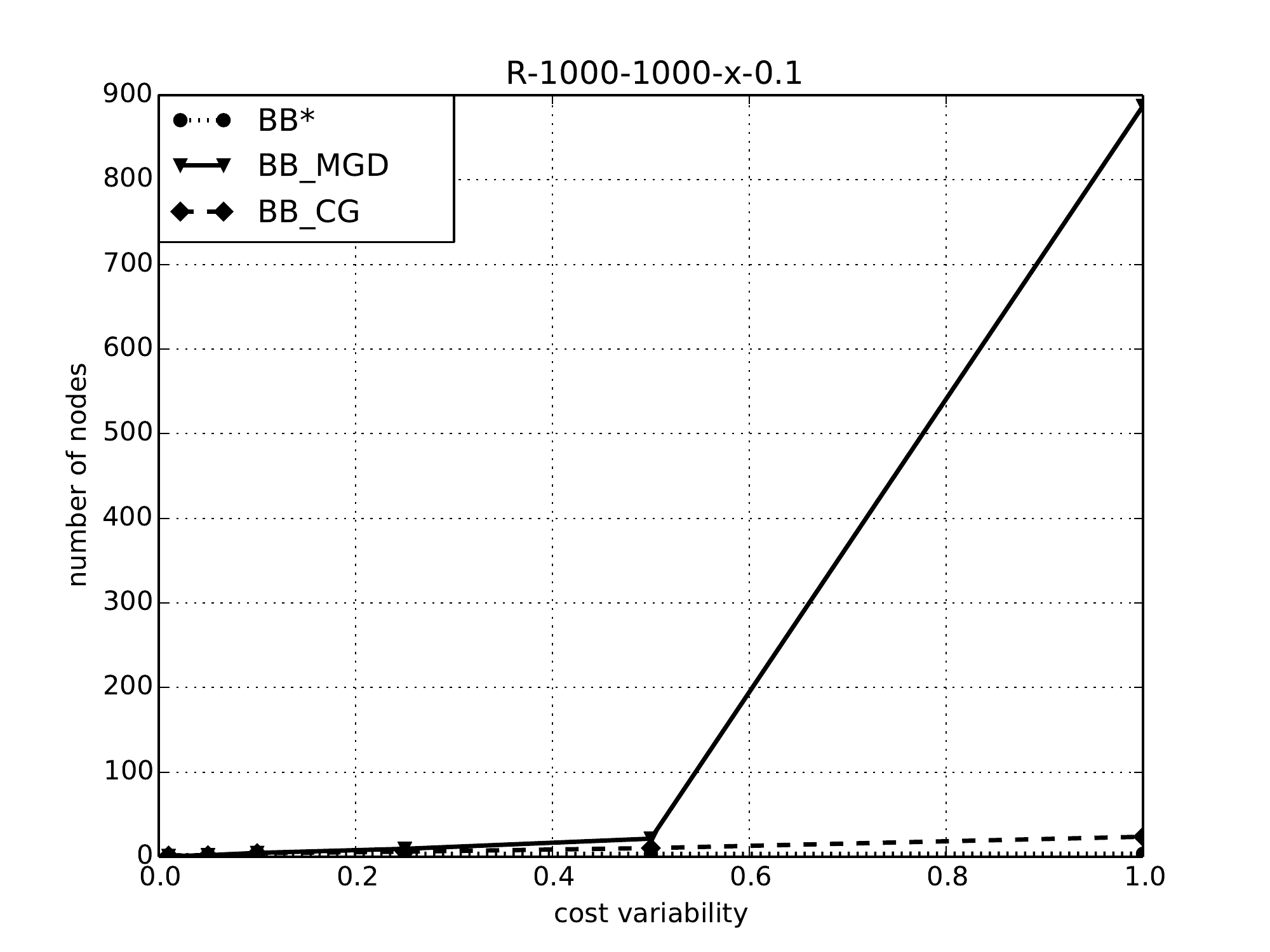}}{\Huge d) R-Graphs with different cost variabilities}
  }  
  \resizebox{\textwidth}{!}{
    \stackunder[5pt]{  \includegraphics[]{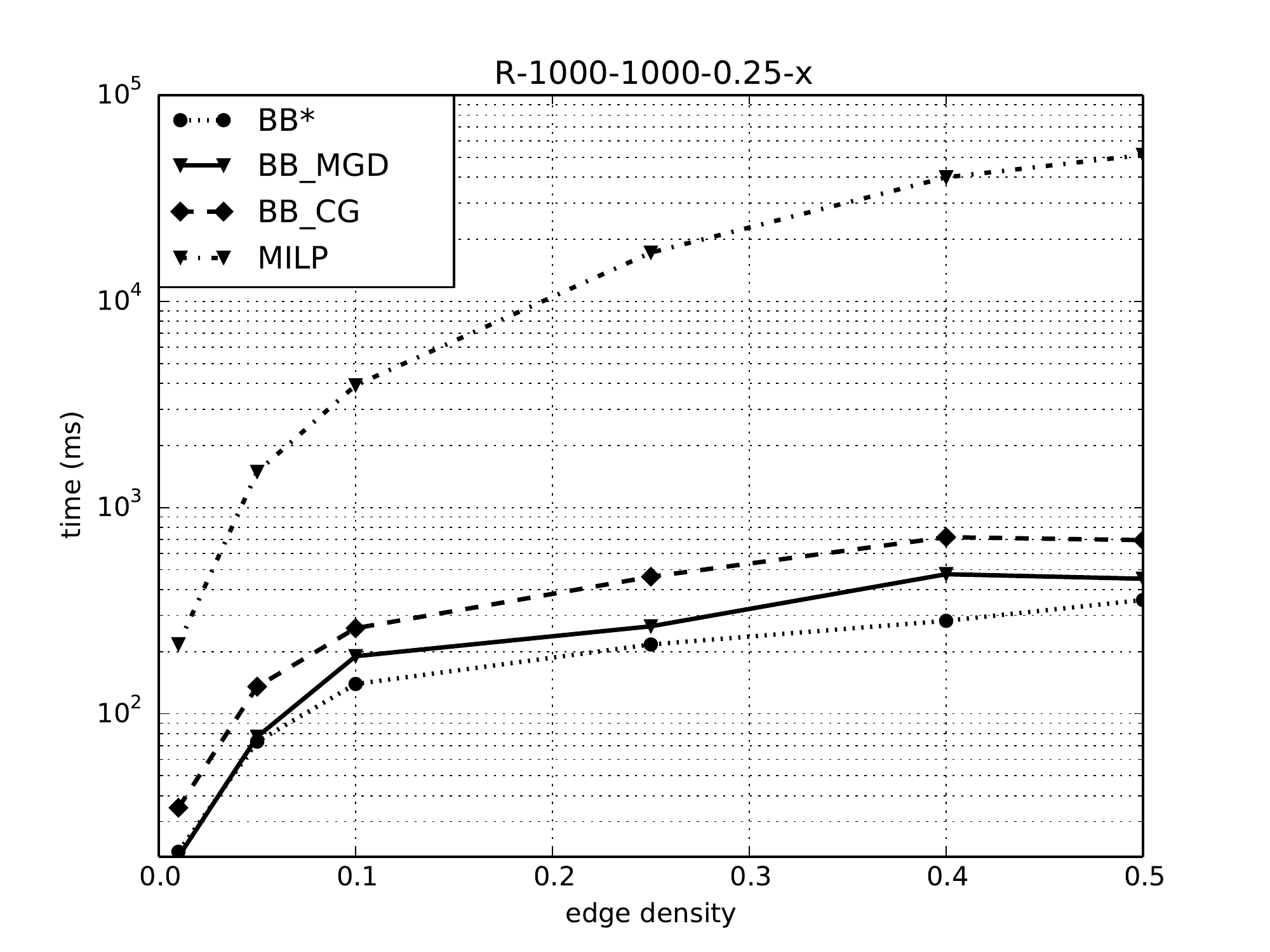} \label{fig:Re}}{\Huge e) R-Graphs with different edge densities}
  \hspace{5pt}
    \stackunder[5pt]{ \label{fig:Rf} \includegraphics[]{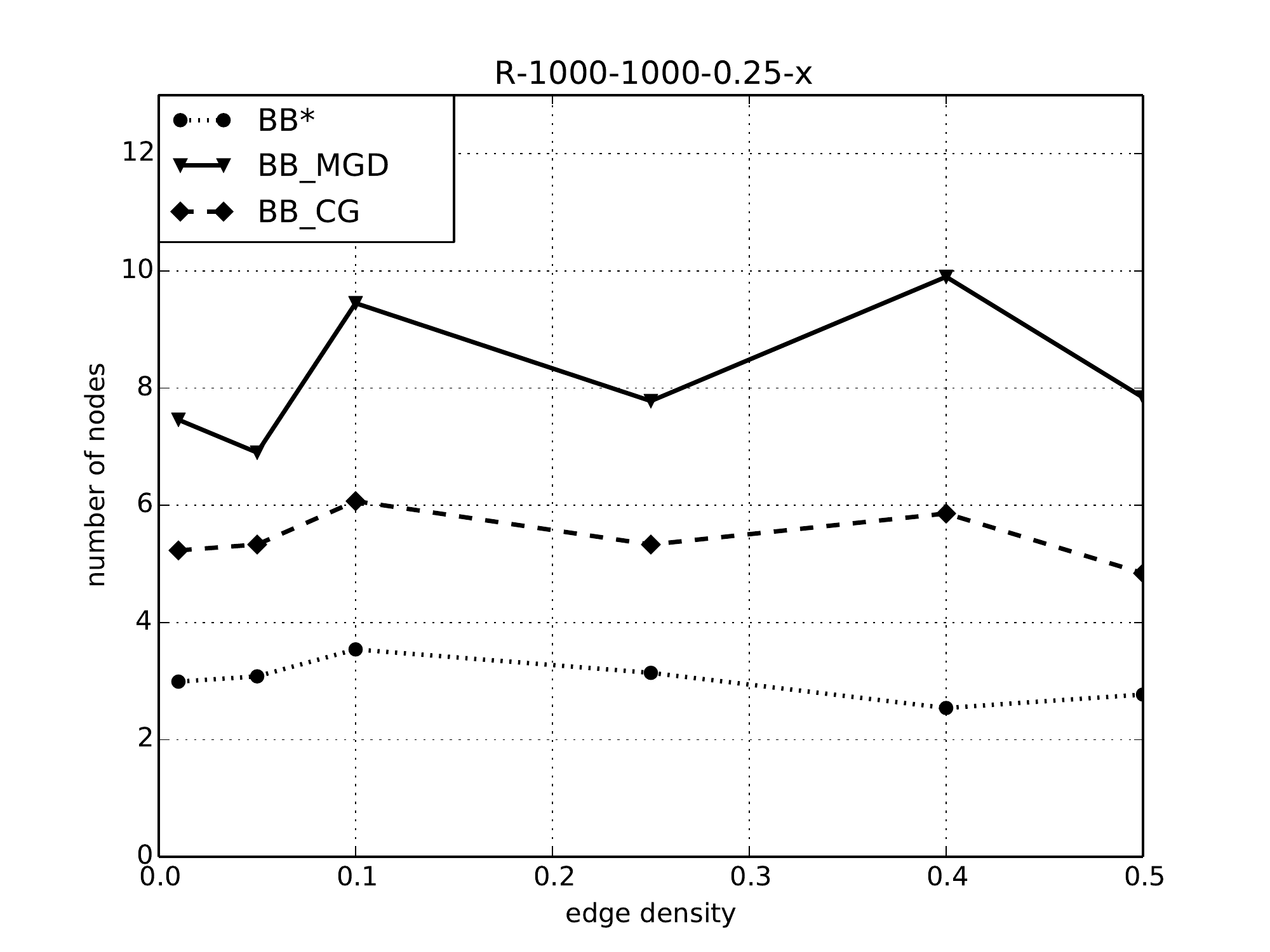}}{\Huge f) R-Graphs with different edge densities}

  }
  
  \caption{Average computation time and number of nodes of the branch and bound tree for R-graphs}
  \label{fig:Rgraphs}
\end{figure*}

\begin{figure*}[htp]
  \centering
  \resizebox{\textwidth}{!}{
  \stackunder[5pt]{ \label{fig:Ka} \includegraphics[]{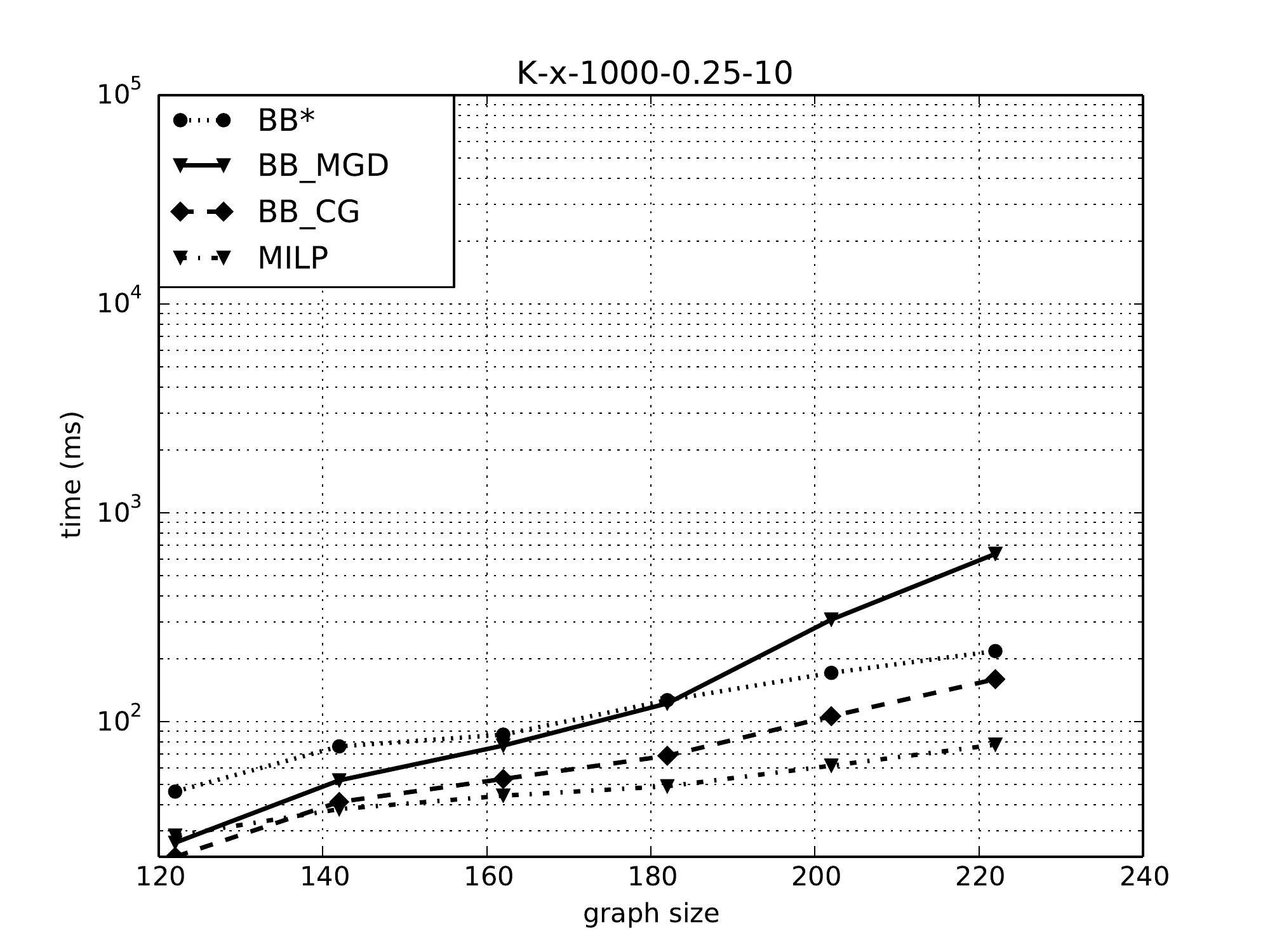}}{\Huge a) K-Graphs with different numbers of nodes}
   \hspace{5pt}
\stackunder[5pt]{ \label{fig:Kb} \includegraphics[]{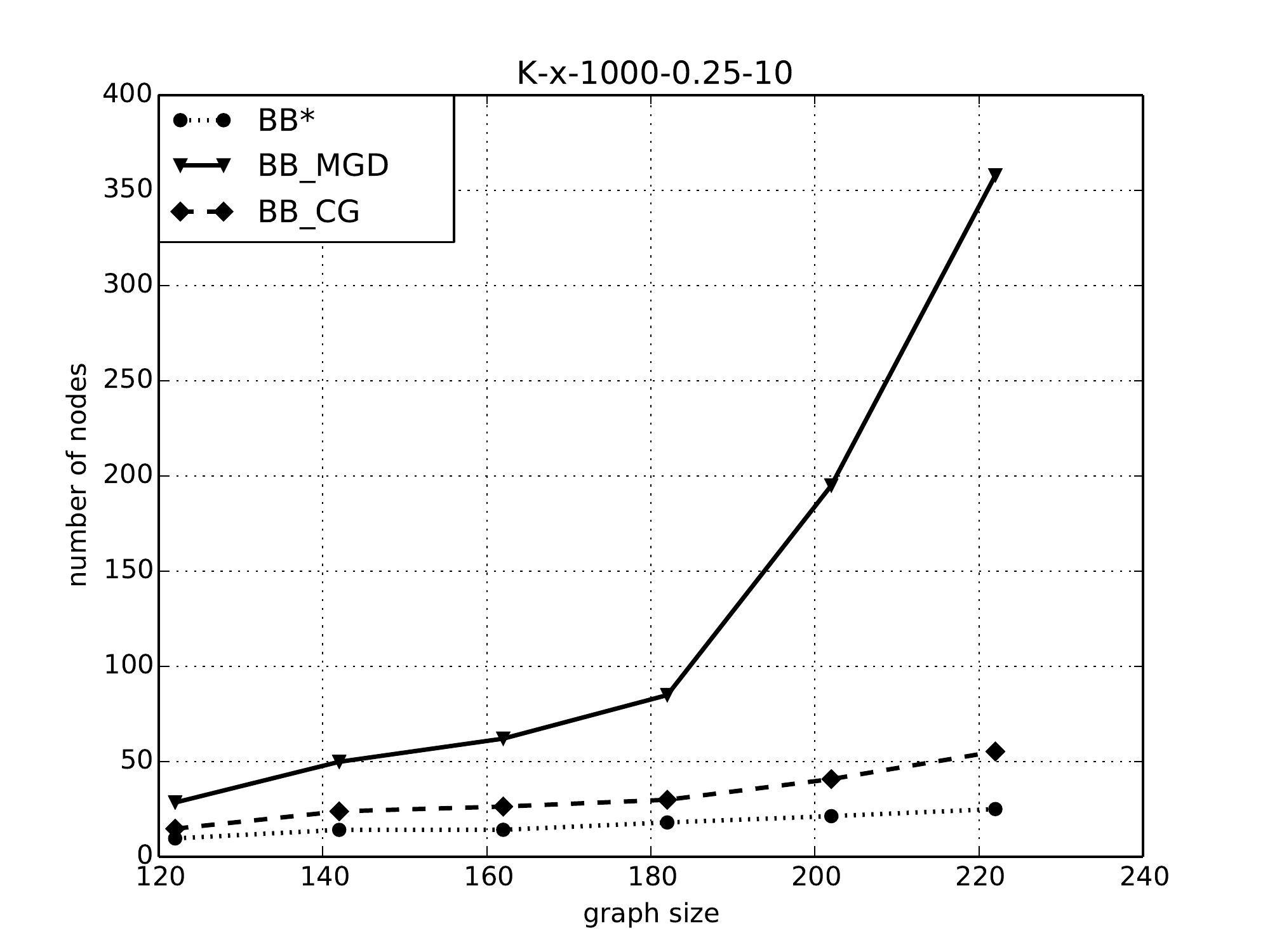}}{\Huge b) K-Graphs with different numbers of nodes}
  }
  \resizebox{\textwidth}{!}{
  \stackunder[5pt]{ \label{fig:Kc} \includegraphics[]{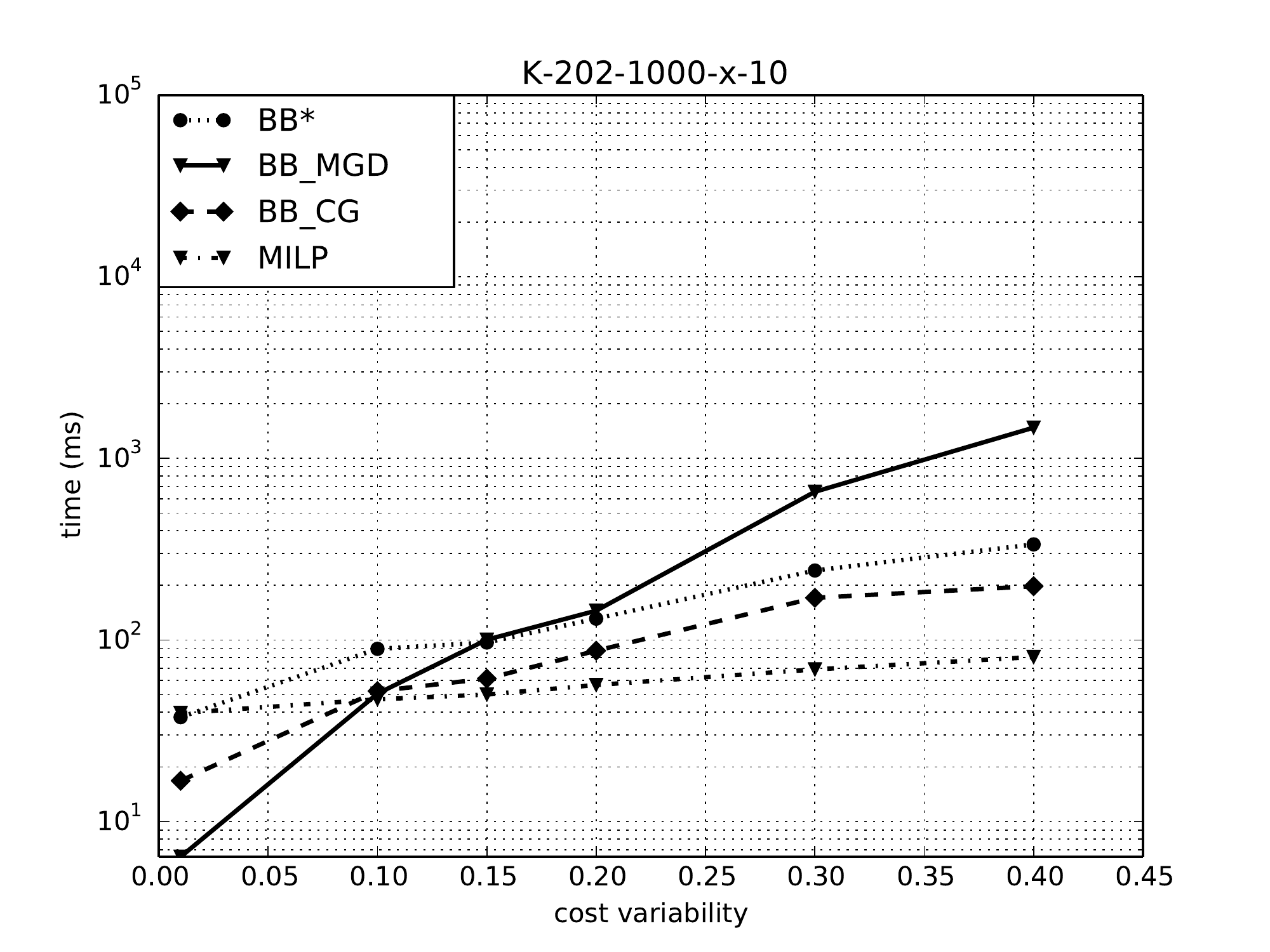}}{\Huge c) K-Graphs with different cost variabilities}
  \hspace{5pt}
  \stackunder[5pt]{ \label{fig:Kd} \includegraphics[]{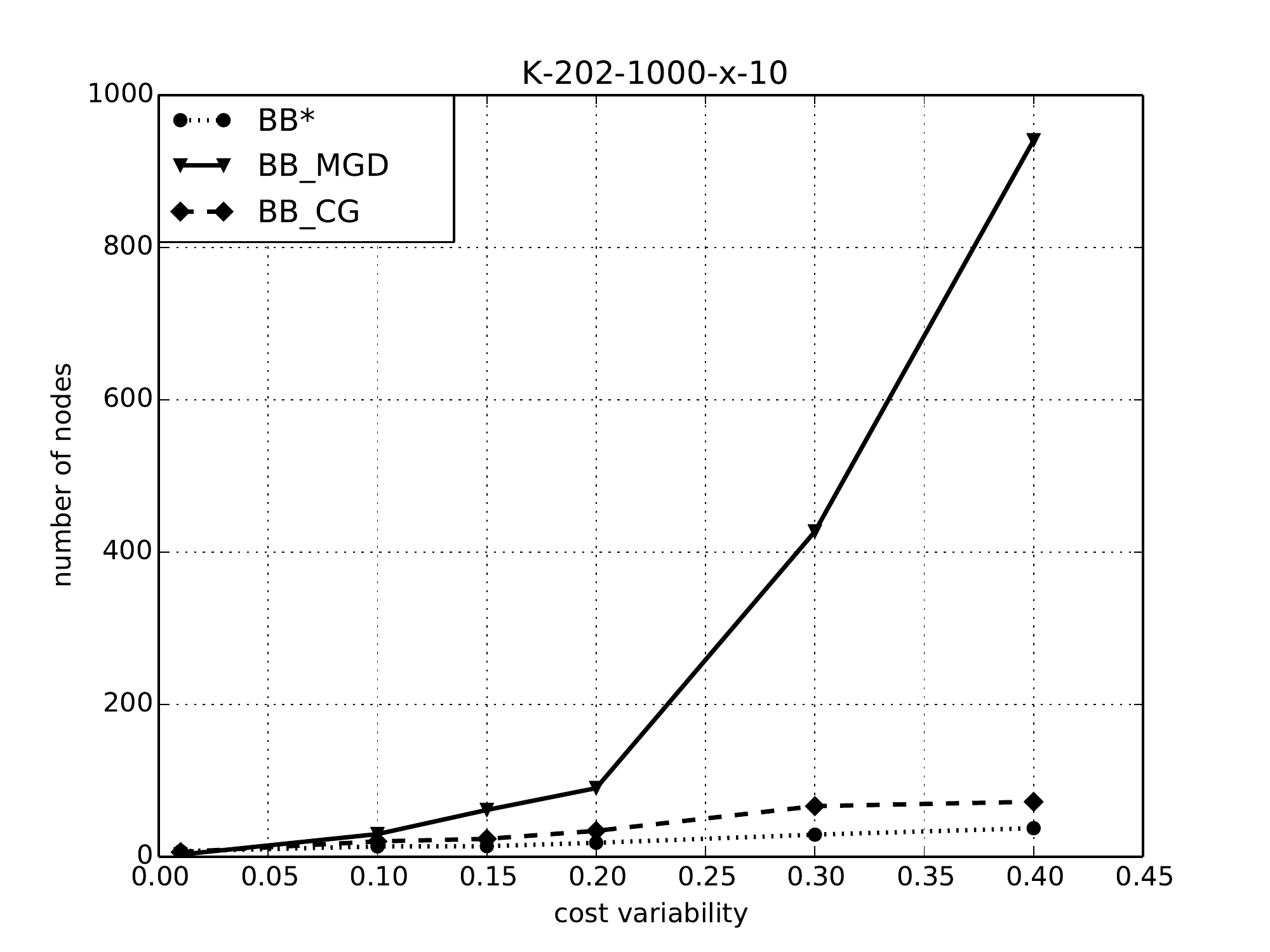}}{\Huge d) K-Graphs with different cost variabilities}
  }  
  \resizebox{\textwidth}{!}{
    \stackunder[5pt]{ \label{fig:Ke} \includegraphics[]{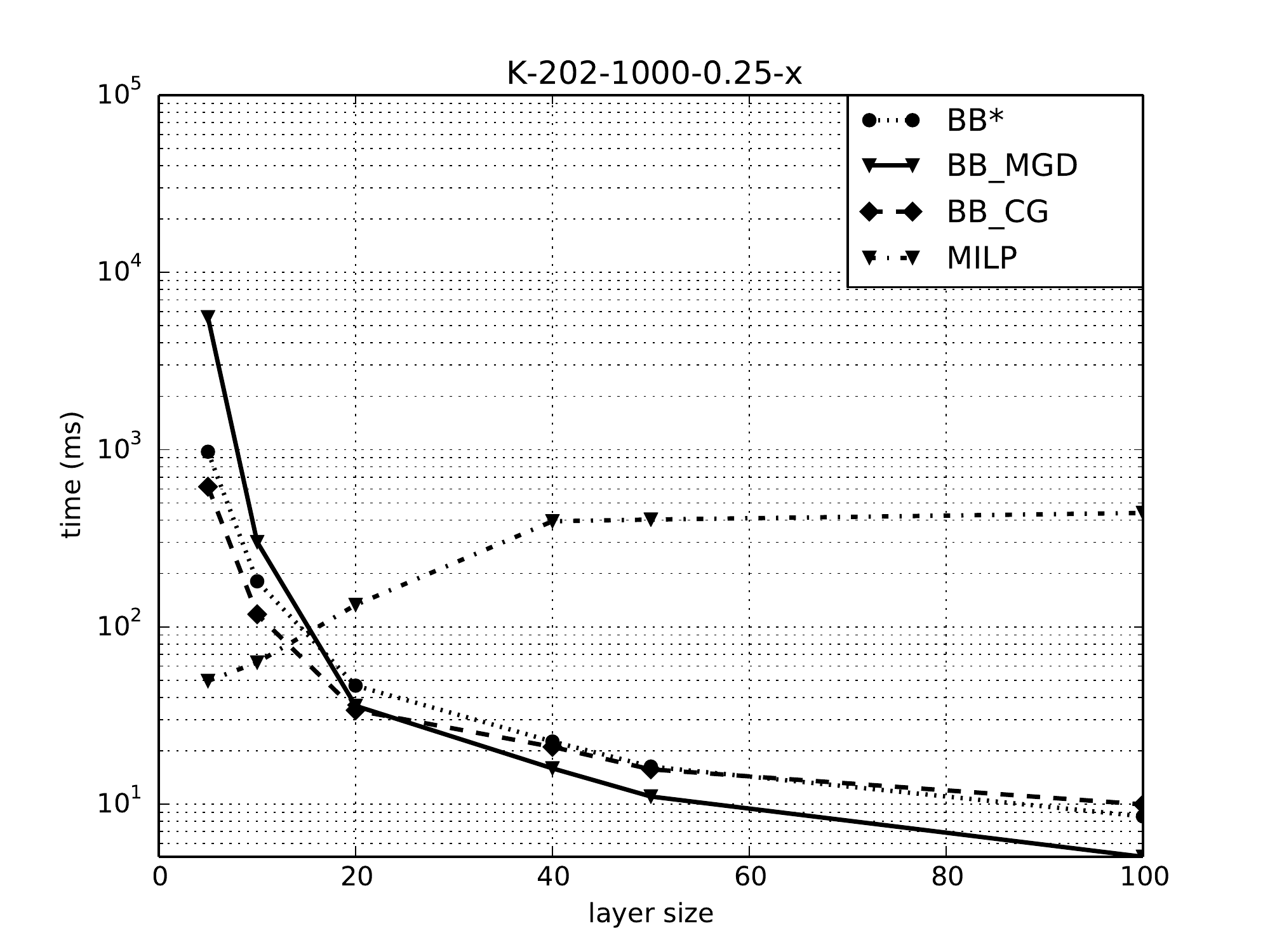}}{\Huge e) K-Graphs with different layer sizes}
  \hspace{5pt}
    \stackunder[5pt]{ \label{fig:Kf} \includegraphics[]{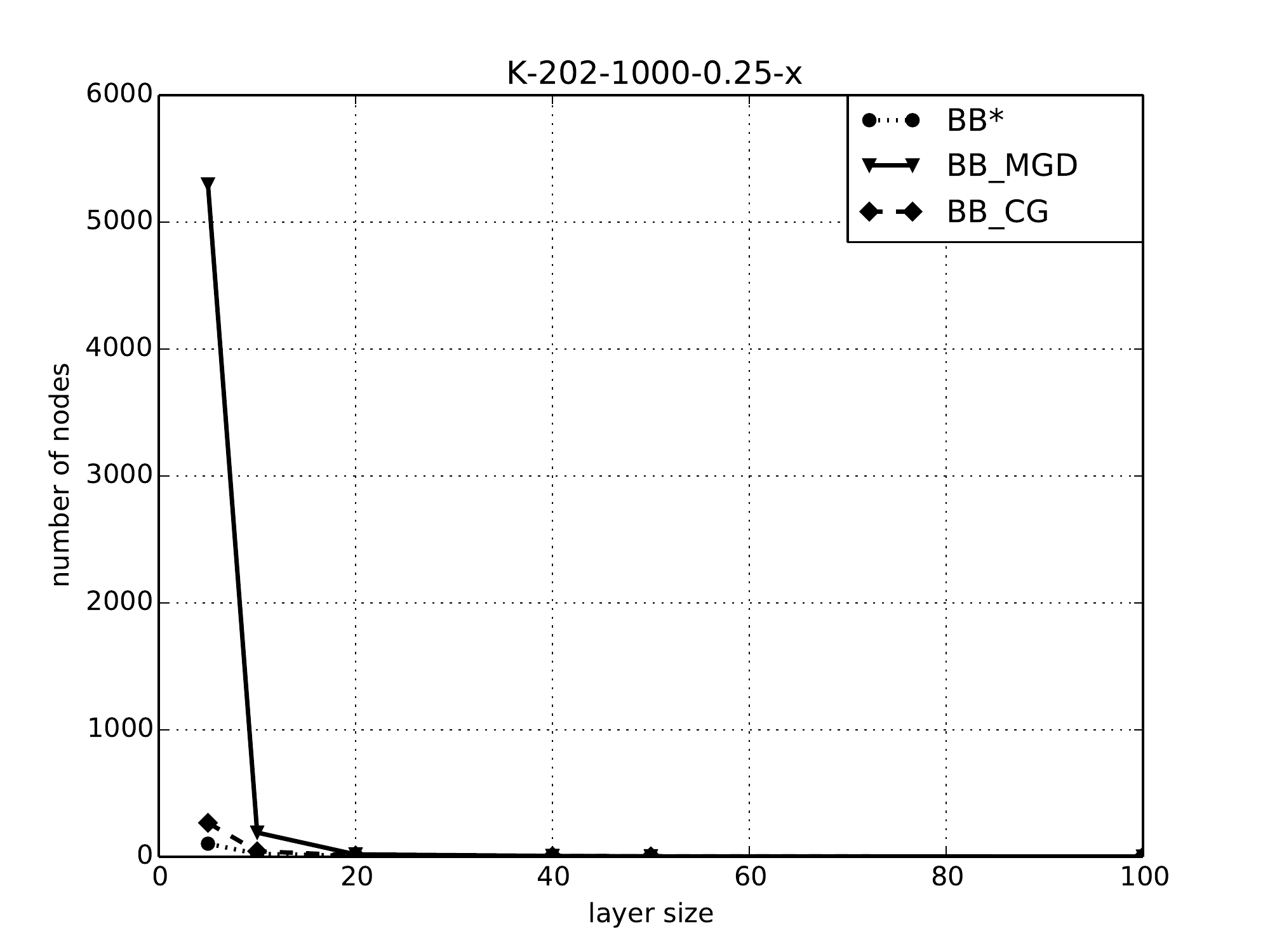}}{\Huge f) K-Graphs with different layer sizes}

  }
  
  \caption{Average computation time and number of nodes of the branch and bound tree for K-graphs}
  \label{fig:Kgraphs}
\end{figure*}

  

We conclude our experiments by comparing the three branch and bound algorithms on ``large'' R-graph instances. The results are presented in Table~\ref{Tab:LR}. Algorithm $BB^*$ explores very few nodes and outperforms the two other branch and bound algorithms on the three types of graphs considered. This experiment confirms that $BB_{MGD}$ is very sensitive to cost variability explaining its poor behavior on graphs with high cost variability. Finally, $BB_{CG}$ is more penalized by the large size of the graph. This is easily explained by the fact that Suurballe's algorithm (called many times in $BB_{CG}$) requires to run a single-source \emph{all-target} Dijkstra's algorithm while only single-source \emph{single-target} Dijkstra's algorithms are required in $BB^*$ and $BB_{MGD}$. 

\begin{table*}
\footnotesize
\resizebox{\textwidth}{!}{\begin{tabular}{| l || r | r | r || r | r | r || r | r | r|}
  \hline
   & \multicolumn{3}{|c|}{R-10000-1000-0.5-0.001} & \multicolumn{3}{|c|}{R-10000-1000-1.-0.001} & \multicolumn{3}{|c|}{R-10000-1000-0.5-0.1}\\
   \hline
   & $BB^*$ & $BB_{CG}$ & $BB_{MGD}$ & $BB^*$ & $BB_{CG}$ & $BB_{MGD}$ & $BB^*$ & $BB_{CG}$ & $BB_{MGD}$ \\
  \hline
  time (ms) & 118.29 & 1006.06 & 1017.89 & 259.47 & 3336.59 & 39430.90 & 17536.20 & 187045.00 & 51615.70 \\
  \hline
  number of nodes & 4.42 & 16.16 & 77.16 & 7.17 & 48.13 & 3037.98 & 4.31 & 12.94 & 58.93\\
  \hline  
\end{tabular}}
\caption{Large R-graphs}
\label{Tab:LR}
\end{table*}

\newpage
\section{Conclusion} 
\label{sec:conclusion}

In this paper, based on a game-theoretic view of robust optimization similar to the one proposed by Mastin et al. \cite{mastin2015randomized}, we derived a general anytime double oracle algorithm to compute an accurate lower bound on the minmax regret value of a robust optimization problem with interval data. We discussed how this lower bound can be efficiently used in a branch and bound algorithm to compute a minmax regret solution, and we provided experimental results on the robust shortest path problem. Compared to other approaches proposed in the literature, our approach leads to a significant improvement of the computation times on many instances.

For future works, a straightforward research direction would be to investigate the benefits of using the lower bounding procedure proposed here within solution algorithms for other minmax regret optimization problems. Furthermore, it could be worth investigating alternative approaches to solve the game in the lower bounding procedure. Even if the double oracle algorithm performs well in practice, no upper bound on the number of iterations has been established in the literature (other than the total number of pure strategies in the game). In this concern, Zinkevich et al. \cite{ZinkevichBB07} proposed an alternative approach to solve massive games with a theoretical upper bound on the number of generated strategies. It could be interesting to study the practical performances obtained for the lower bounding procedure if one substitutes the double oracle algorithm used here by their approach. Lastly, as emphasized by Chassein and Goerigk \cite{DBLP:journals/eor/ChasseinG15}, note that any probability distribution over $\mathcal{U}$ can be used to compute a lower bound for the minmax regret. Their bound is obtained by using a very limited set of probability distributions over $\mathcal{U}$. On the contrary, our lower bound relies on the entire set of probability distributions over $\mathcal{U}$. One research direction could be to investigate ``intermediate'' sets of probability distribution over $\mathcal{U}$ (i.e., wider than the set used by Chassein and Goerigk, but tighter than the set we use), that would still provide a good accuracy of the lower bound, together with a strong algorithmic efficiency of the lower bounding procedure.

\small
\section*{References}
\bibliographystyle{named}
\bibliography{ijcai15_209}

\end{document}